\documentclass{article}
\usepackage{lmodern}
\usepackage[T1]{fontenc}

\usepackage{color}
\usepackage{sgame}

\usepackage{thanks}
\usepackage{latexsym}
\usepackage{amssymb}
\usepackage{amsmath}

\usepackage{amstext}
\usepackage{amsopn}

\usepackage{enumerate}
\usepackage{graphics}
\usepackage[all]{xy}
\usepackage{url}
\usepackage{enumerate}

 \newtheorem{theorem}{Theorem}
 
 \newtheorem{proposition}[theorem]{Proposition}

\newtheorem{lemma}[theorem]{Lemma}
\newenvironment{proof}{\noindent{\bf Proof.}}{\HB}
\newtheorem{myclaim}{Claim}{\bfseries}{\itshape} 
\newtheorem{myexample}{Example}{\bfseries}{\rm} 

\newcommand{\payoff}{p}
\newcommand{\strprofile}{s}

\newcommand{\nat}{\mathbb{N}}



\newcommand{\ordlexic}{\prec_{L}}

\newcommand{\step}[1]{\mbox{$\stackrel{#1}{\Rightarrow}$}}
\newcommand{\betstep}[1]{\mbox{$\stackrel{#1}{\rightarrow}$}}

\newcommand{\strgame}{G}
\newcommand{\bredge}[1]{\step{#1}}
\newcommand{\betredge}[1]{\betstep{#1}}

\newcommand{\sch}{f}
\newcommand{\st}{\mathit{State}}

\newcommand{\bestr}{\mathit{BR}}
\newcommand{\brset}{\mathit{BR}}
\newcommand{\nbrset}{\mathit{NBR}}
\newcommand{\ordtuple}{\lhd}
\newcommand{\coord}{\mathit{coord}}



\newcommand{\ES}{\emptyset}






%

\newcounter{symbol}
\setcounter{symbol}{9} 
\newcommand{\indexsyma}[1]%
{\stepcounter{symbol}\index{zzz1 \thesymbol @\protect#1}}
\newcommand{\indexsymb}[1]%
{\stepcounter{symbol}\index{zzz2 \thesymbol @\protect#1}}
\newcommand{\indexsymc}[1]%
{\stepcounter{symbol}\index{zzz3 \thesymbol @\protect#1}}
\newcommand{\indexsymd}[1]%
{\stepcounter{symbol}\index{zzz4 \thesymbol @\protect#1}}
\newcommand{\indexsyme}[1]%
{\stepcounter{symbol}\index{zzz5 \thesymbol @\protect#1}}



%


\newcommand{\bfe}[1]{\begin{bfseries}\emph{#1}\end{bfseries}\index{#1}}

\newcommand{\myra}{\mbox{$\:\rightarrow\:$}}
\newcommand{\La}{\mbox{$\:\Leftarrow\:$}}
\newcommand{\Ra}{\mbox{$\:\Rightarrow\:$}}
\newcommand{\tra}{\mbox{$\:\rightarrow^*\:$}}

\newcommand{\sse}{\mbox{$\:\subseteq\:$}}

\newcommand{\fa}{\mbox{$\forall$}}
\newcommand{\te}{\mbox{$\exists$}}

\newcommand{\LLn}{\mbox{$1,\ldots,n$}}
\newcommand{\LL}{\mbox{$\ldots$}}


\newcommand{\C}[1]{\mbox{$\{{#1}\}$}}           

\newcommand{\NI}{\noindent}
\newcommand{\HB}{\hfill{$\Box$}}
\newcommand{\VV}{\vspace{5 mm}}

\newcommand{\II}{\vspace{2 mm}}





%
%
%
%
%

\newcommand{\szkew}[1]{\relax \setbox0=\hbox{\kern -24pt $\displaystyle#1$\kern 0pt }%
\box0}
{\catcode`\@=11 \global\let\ifjusthvtest@=\iffalse}

\newcounter{oldmycaption}














\title{A Classification of Weakly Acyclic Games \thanks{A preliminary version of this paper appeared as \cite{AS12}.}}

\author{
Krzysztof R. Apt
    \thanks{%
      Centre for Mathematics and Computer Science (CWI)
      and University of Amsterdam, The Netherlands
    }
\and
Sunil Simon
    \thanks{%
Department of Computer Science and Engineering,
IIT Kanpur, Kanpur, India
}
}

\date{}
\pagestyle{plain}

\begin{document}

\maketitle

\begin{abstract}
  Weakly acyclic games form a natural generalization of the class of
  games that have the finite improvement property (FIP). In such games
  one stipulates that from any initial joint strategy some finite
  improvement path exists. We classify weakly acyclic games using the
  concept of a scheduler introduced in \cite{SA12}. We also
  show that finite games that can be solved by the iterated
  elimination of never best response strategies are weakly acyclic.
  Finally, we explain how the schedulers allow us to improve the
  bounds on finding a Nash equilibrium in a weakly acyclic game.
\end{abstract}

\section{Introduction}

\subsection{Background}

Given a strategic game, when we allow the players to improve their
choices on a unilateral basis, we are naturally brought to the concept
of an \emph{improvement path}, in which at each stage a single player
who did not select a best response is allowed to select a better
strategy.  By definition every finite improvement path terminates in a
Nash equilibrium.  This suggests the \emph{finite improvement
  property} (FIP), introduced in \cite{MS96}, according to which every
improvement path is finite. It is obviously a desired property of a
game that in particular is satisfied by the congestion games, 
as explained in \cite{MS96}.

However, the FIP is a very strong property and many natural games do
not satisfy it.  In particular, \cite{Mil96} studied the congestion
games in which the payoff functions are players specific.
These games do not have the FIP.  Milchtaich proved that such games belong to
a larger class of games (essentially introduced in \cite{You93}),
called \emph{weakly acyclic games}, in which one only stipulates that
from any initial joint strategy some finite improvement path exists.

Weakly acyclic games have a natural appeal because the concept of an
improvement path captures the idea of a possible `interaction'
resulting from players' preference for better strategies and hence
provides a link with distributed computing.  In particular,
\cite{ES11} introduced a natural class of weakly acyclic games, which
model the routing aspects on the Internet.  In turn, \cite{MardenAS07}
showed that for weakly acyclic games, a modification of the traditional
no-regret algorithm yields almost sure convergence to a pure Nash
equilibrium. Further, \cite{FabrikantJS10} proved that the existence
of a unique (pure) Nash equilibrium in every subgame implies that the
game is weakly acyclic.

\subsection{Our work}

If we view a strategic game as a distributed system in which the
players attempt to find a Nash equilibrium by means of a `better
response (respectively, `best response) dynamics', then the property
of being weakly acyclic only guarantees that finding a Nash
equilibrium is always possible. However, such an existence guarantee
does not help the players to find it. By adding to the game a
\bfe{scheduler}, a concept introduced in \cite{SA12}, we
ensure that the players always reach a Nash equilibrium, by repeatedly
interacting with it. A scheduler is simply a function that given a
finite sequence of joint strategies selects a player who can improve
his payoff in the last joint strategy. Each player interacts with the
scheduler by submitting to it a strategy he selected. Subsequently the
scheduler again selects a player who did not select a best response.
This interaction process leaves open how each player selects his
better (respectively, best) strategy.

In the presence of a scheduler for a strategic game $G$ we can view
the resulting interaction as a `supergame' between the central
authority represented by the scheduler and the players of $G$.  The
aim of the central authority is to reach a Nash equilibrium in spite
of a limited guarantee on the behaviour of the players: all it can be
sure of is that each selected player will select a better response
(respectively a best response).  The resulting interaction results in
an improvement path (respectively a best response improvement
path). If all so generated improvement paths are finite, we say that
the game \emph{respects} the scheduler.

By providing a classification of the schedulers we obtain a natural
classification of weakly acyclic games.  An advantage of such a
classification is that given a weakly acyclic game we can determine
under what adverse circumstances a Nash equilibrium still can be
reached. Consequently some existing results can be improved.  In
particular, we show in Section \ref{sec:coordination} how we can
strengthen our result from \cite{SA12} concerning a class of
social network games. In turn, \cite{BV12} strengthened the
above mentioned theorem of \cite{Mil96} by showing that congestion
games with player specific payoff functions respect every local best
response scheduler, defined below in Section \ref{sec:schedulers}.
Further, as we explain in Section \ref{sec:bounds}, the existence of a
scheduler allows us to improve bounds on finding a Nash equilibrium.

In turn, a recent paper \cite{KL13} demonstrates the applicability of
schedulers in order to improve bounds on finding a Nash equilibrium in
certain classes of network creation games proposed in \cite{FabNC03}.
The authors show that while an arbitrary improvement path for the
class of so-called MAX swap games on trees might take $O(n^3)$ steps
to converge to an Nash equilibrium, a natural scheduler reduces the
complexity to $O(n \log n)$ and moreover this bound is optimal.

In what follows we introduce eight natural categories of schedulers.
They yield a classification of finite weakly acyclic games that
consists of nine classes games that for two player games collapse into
five classes. Then we study finite games that can be solved by the
iterated elimination of never best response strategies (IENBR) and
show that they are weakly acyclic. We also locate where such games fit
in our classification.  Further, we define a modified notion of a
potential using which we characterize the games that respect a
scheduler.  We also use this notion to reason about a natural class of
coordination games.  Finally, we discuss the consequences of
a fact that a game respects a scheduler.

\section{Preliminaries}
\label{sec:prelim}

Assume a set $N := \{1, \ldots, n\}$ of players, where $n > 1$.  A
\bfe{strategic game} for $n$ players, written as 
$(S_1, \ldots, S_n, p_1, \ldots, p_n)$, consists of a non-empty 
set $S_i$ of
\bfe{strategies} and a \bfe{payoff function} $p_i : S_1 \times \cdots
\times S_n \myra \mathbb{R}$, for each player $i$.  

Fix a strategic game $ G := (S_1, \ldots, S_n, p_1, \ldots, p_n).  $
We denote $S_1 \times \cdots \times S_n$ by $S$, call each element $s
\in S$ a \bfe{joint strategy}, denote the $i$th element of $s$ by
$s_i$, and abbreviate the sequence $(s_{j})_{j \neq i}$ to $s_{-i}$.
Occasionally we write $(s_i, s_{-i})$ instead of $s$.  Finally, we
abbreviate $\times_{j \neq i} S_j$ to $S_{-i}$.

We call a strategy $s_i$ of player $i$ a \bfe{best response} to a
joint strategy $s_{-i}$ of his opponents if $ \fa s'_i \in S_i
\ p_i(s_i, s_{-i}) \geq p_i(s'_i, s_{-i}).  $ If $s_i$ is (not) a best
response to $s_{-i}$, we say that player $i$ \bfe{selected} (\bfe{did
  not select}) \bfe{a best response in $s$}.  Next, we call a joint
strategy $s$ a (pure) \bfe{Nash equilibrium} if each $s_i$ is a best response
to $s_{-i}$, that is, if
$\fa i \in N \ \fa s'_i \in S_i \ p_i(s_i, s_{-i}) \geq p_i(s'_i, s_{-i})$.
We also define
\[
\brset(s) := \{i \mid \textrm{player $i$ selected a best response in } \strprofile\}, 
\]
\[
\mathit{NBR}(s) := \{i \mid \textrm{player $i$ did not select a best response in } \strprofile\}.
\]

Further, we call a strategy $s'_i$ of player $i$ a \bfe{better
  response} given a joint strategy $s$ if $p_i(s'_i, s_{-i}) >
p_i(s_i, s_{-i})$. Following \cite{MS96} a \bfe{path} in $S$ is a
sequence $(s^1, s^2, \LL)$ of joint strategies such that for every $k
> 1$ there is a player $i$ such that $s^k = (s'_i, s^{k-1}_{-i})$ for
some $s'_i \neq s^{k-1}_{i}$.  A path is called an \bfe{improvement
  path} (respectively, a \bfe{best response improvement path}, in
short a \bfe{BR-improvement path}) if it is maximal and for all $k >
1$, $p_i(s^k) > p_i(s^{k-1})$ (respectively, $s^k_i$ is a best
response to $s^{k-1}_{-i}$), where $i$ is the player who deviated from
$s^{k-1}$.  So in an improvement path each deviating player selects a
better response, while in a BR-improvement path each deviating player
selects a best response. Given a finite prefix $\rho$ of a path in
$S$, we denote by $\rho^*$ the infinite path generated by the repeated
concatenation of $\rho$. 

The \bfe{better response graph} (respectively, the \bfe{best response
  graph}) associated with the game $G$ is defined as $(S, \to)$, where
$s \to s'$ if $(s, s')$ is a step in an improvement path
(respectively, in an BR-improvement path).

Given joint strategies $\strprofile, \strprofile' \in S$ 
and a player $i$ we define
\[
\mbox{$\strprofile \betredge{i} \strprofile'$ iff
$\strprofile_{-i} = \strprofile'_{-i}$ and $\payoff_i(\strprofile') >
\payoff_i(\strprofile)$},
\]
\[
\mbox{$\strprofile \bredge{i} \strprofile'$ iff
$\strprofile \betredge{i} \strprofile'$ 
and $\strprofile_{i}'$ is a best response to 
$\strprofile'_{-i}$}.
\]

Recall that $G$ has the \bfe{finite improvement property} (\bfe{FIP}),
(respectively, the \bfe{finite best response property} (\bfe{FBRP}))
if every improvement path (respectively, every BR-improvement path) is
finite. Obviously, if a game has the FIP or the FBRP, then it has a
Nash equilibrium ---it is the last element of each path.  Following
\cite{You93,Mil96} we call a strategic game \bfe{weakly acyclic}
(respectively, \bfe{BR-weakly acyclic}) if for any joint strategy
there exists a finite improvement path (respectively, BR-improvement
path) that starts at it.

In Section \ref{sec:remaining} we shall combine two $n$ players games
$G := (S_1, \ldots, S_n, p_1, \ldots, p_n)$ and
$G':= (S'_1, \ldots, S'_n, p'_1, \ldots, p'_n)$ such 
that $S_1 \cap S'_1 = \ES$ by means of the following 
operation:
\[
G \uplus G' := (S_1 \cup S'_1, \ldots, S_n \cup S'_n, r_1, \ldots, r_n),
\]
where
\[
r_i(s) :=   
\begin{cases}
p_i(s) & \textrm{if } s \in S_1 \times \cdots \times S_n \\
p'_i(s) & \textrm{if } s \in S'_1 \times \cdots \times S'_n \\
0 & \textrm{otherwise}
\end{cases}
\]
The $\uplus$ operation is obviously associative, so it is justified to use $\uplus$ as an $k$-ary operator
on $n$ players games.

\section{Schedulers}
\label{sec:schedulers}

In what follows we introduce some classes of weakly acyclic
games. They are defined in terms of schedulers.  By a \bfe{scheduler}
we mean a function $f$ that given finite sequence $\strprofile^1,\LL,
\strprofile^k$ of joint strategies that does not end in a Nash
equilibrium selects a player who did not select in $\strprofile^k$ a
best response.  In practice schedulers will be applied only to initial
prefixes of improvement paths.

Consider an improvement path
$\rho=(\strprofile^1,\strprofile^2,\ldots)$.  We say that $\rho$
\bfe{respects} a scheduler $f$ if for all $k$ smaller than the
length of $\rho$ we have
$\strprofile^{k+1}=(\strprofile_i',\strprofile^{k}_{-i})$, where
$f(\strprofile^1,\LL, \strprofile^k)=i$.  We say that a strategic
game \bfe{respects a scheduler $f$} if all improvement paths $\rho$
which respect $f$ are finite.
Further, we say that a strategic 
game \bfe{respects a BR-scheduler $f$} if all BR-improvement paths $\rho$
which respect $f$ are finite.

In what follows we study various types of schedulers.
We say that a scheduler $f$ is \bfe{state-based}
if for some function $g: S \myra N$
we have 
\[
f(\strprofile^1,\LL, \strprofile^k)= g(\strprofile^k).
\]

We say that a function
$g: {\cal P}(N) \myra N$ is a \bfe{choice function} if
for all $A \neq \ES$ we have $g(A) \in A$.
Next, we say that a scheduler $f$ is \bfe{set-based}
if for some choice function $g: {\cal P}(N) \myra N$ 
\[
f(\strprofile^1,\LL, \strprofile^k)= g(\mathit{NBR}(\strprofile^k)).
\]
Finally, we say that a set-based scheduler $f$ is \bfe{local} if
the above choice function $g$ satisfies the following
property:
\begin{equation}
  \label{equ:local}
\mbox{if $g(A) \in B \sse A$ then $g(A) = g(B)$.}  
\end{equation}
  
A simple way of producing choice functions $g: {\cal P}(N) \myra N$
that satisfy (\ref{equ:local}) is the following.
Take a permutation $\pi$ of $1, \LL, n$ and define
for $A \neq \ES$
\[
[\pi](A) := \pi(k),
\]
where $k$ is the smallest element of $N$ such that $\pi(k) \in A$.
That is, $[\pi](A)$ is the first element on the list 
$\pi(1), \LL, \pi(n)$ that belongs to $A$.

To simplify notation, we often view a set based scheduler as $f:{\cal
  P}(N) \myra N$ by equating it to the corresponding choice
function. In Section \ref{sec:coordination} we shall need the
following characterization result.

\begin{proposition}
\label{prop:local-permutation}
A choice function $g: {\cal P}(N) \myra N$ satisfies (\ref{equ:local}) iff
it is of the form $[\pi]$ for some
permutation $\pi$ of $1, \LL, n$.
\end{proposition}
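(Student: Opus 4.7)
The plan is to prove both directions separately, with the reverse implication being the substantive one. For the forward direction (if $g = [\pi]$ then $g$ satisfies (\ref{equ:local})), I would observe that $[\pi](A)$ is, by definition, the $\pi(k)$ with smallest $k$ such that $\pi(k) \in A$. If $[\pi](A) \in B \subseteq A$, then no $\pi(j)$ with $j < k$ lies in $A$, so in particular no such $\pi(j)$ lies in $B$; since $\pi(k) \in B$, it follows that $[\pi](B) = \pi(k) = [\pi](A)$.

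For the converse, assume $g$ satisfies (\ref{equ:local}) and build a candidate permutation inductively from $g$. Set
\[
\pi(k) := g\bigl(N \setminus \{\pi(1), \ldots, \pi(k-1)\}\bigr) \quad \text{for } k = 1, \ldots, n.
\]
Because $g$ is a choice function, at each step the argument is nonempty (by induction on $k \leq n$, since $|N \setminus \{\pi(1),\ldots,\pi(k-1)\}| = n - k + 1 \geq 1$), and $\pi(k)$ lies in it, so $\pi(k) \notin \{\pi(1), \ldots, \pi(k-1)\}$. Hence $\pi$ is injective on $\{1, \ldots, n\}$, i.e., a permutation of $N$.

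The heart of the argument is to show $g(A) = [\pi](A)$ for every nonempty $A \subseteq N$. Let $k$ be the smallest index with $\pi(k) \in A$, so $[\pi](A) = \pi(k)$. Put $A' := N \setminus \{\pi(1),\ldots,\pi(k-1)\}$. Minimality of $k$ gives $A \cap \{\pi(1),\ldots,\pi(k-1)\} = \emptyset$, hence $A \subseteq A'$; moreover $\pi(k) \in A' $ and, by construction, $g(A') = \pi(k)$. Thus $g(A') \in A \subseteq A'$, so property (\ref{equ:local}) applied to the pair $(A', A)$ yields $g(A) = g(A') = \pi(k) = [\pi](A)$.

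The only delicate point is making sure condition (\ref{equ:local}) is applied in exactly the right direction: the property lets us shrink $A'$ to any $A$ that still contains $g(A')$, which is precisely what we set up. No further subtlety is expected; the main obstacle, such as it is, is choosing the right auxiliary set $A'$ so that a single application of (\ref{equ:local}) links the inductive definition of $\pi$ to the value of $g$ on an arbitrary $A$.
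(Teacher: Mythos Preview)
Your proof is correct and follows essentially the same approach as the paper's: you define $\pi$ inductively by $\pi(k) := g(N \setminus \{\pi(1),\ldots,\pi(k-1)\})$ and then, for an arbitrary nonempty $A$, apply property~(\ref{equ:local}) to the pair consisting of $A$ and $N \setminus \{\pi(1),\ldots,\pi(k-1)\}$ (the paper calls this set $B$, you call it $A'$) to conclude $g(A) = [\pi](A)$. You are slightly more explicit than the paper in verifying that $\pi$ is a permutation and in spelling out the easy direction, which the paper dismisses as straightforward.
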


\begin{proof}
Suppose a choice function $g: {\cal P}(N) \myra N$ satisfies (\ref{equ:local}).
Define a permutation $\pi$ of $1, \LL, n$ inductively as follows:
\[
\pi(1) := g(N), \: \pi(2) := g(N \setminus \{\pi(1)\}), \LL, \: \pi(n) := g(N \setminus \{\pi(1), \LL, \pi(n-1)\}).
\]

Take now a nonempty subset $A$ of $N$. Let $\pi(k) = [\pi](A)$.  By
definition $\{\pi(1), \LL, \pi(k-1)\} \cap A = \ES$ and $\pi(k) \in
A$.  Let $B := N \setminus \{\pi(1), \LL, \pi(k-1)\}$.
By definition $g(B) = \pi(k)$. Further, $A \sse B$ and $\pi(k) \in A$, 
so by property 
(\ref{equ:local}) we have $g(A) = g(B) = [\pi](A)$.

Next, it is straightforward to check that each function $[\pi]$ 
satisfies (\ref{equ:local}).
\HB
\end{proof}

The games that respect schedulers satisfy obvious
implications that we summarize in Figure~\ref{fig:impl}.  Here
\emph{FIP} (respectively, \emph{FBRP}) stands for the class of games
that have the FIP (respectively, FBRP), \emph{WA} (respectively,
\emph{BRWA}) for the class of weakly acyclic games (respectively,
BR-weakly acyclic games), \emph{Sched} (respectively,
$\mathit{Sched}_{\mathit{BR}}$) stands for the class of games that respect
a scheduler (respectively, a BR-scheduler), etc. In what follows we
clarify which implications can be reversed.

\begin{figure}[ht]
\centering
$
\def\objectstyle{\small}
\def\labelstyle{\small}
\xymatrix@C=12pt@R=15pt{
\mathit{FIP} \ar@{=>}[r] & \mathit{Local} \ar@{=>}[r] \ar@{=>}[d]&\mathit{Set} \ar@{=>}[r]  \ar@{=>}[d]& \mathit{State} \ar@{=>}[r] \ar@{=>}[d]& \mathit{Sched} \ar@{=>}[r] \ar@{=>}[d] & \mathit{WA}  \ar@{<=}[d]\\
\mathit{FBRP} \ar@{=>}[r] \ar@{<=}[u]& \mathit{Local}_{\bestr}  \ar@{=>}[r] &\mathit{Set}_{\bestr} \ar@{=>}[r] & \mathit{State}_{\bestr} \ar@{=>}[r] & \mathit{Sched}_{\bestr} \ar@{=>}[r]& \mathit{BRWA}\\
}$
\caption{\label{fig:impl}Dependencies between various classes of
  weakly acyclic games}
\end{figure}
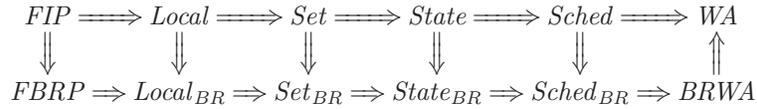

\section{Schedulers versus state-based schedulers}

We prove here three implications which show that the classes of games
$\mathit{Sched}_{\mathit{BR}}$, \emph{Sched}, \emph{State}
and $\mathit{State}_{\mathit{BR}}$ coincide.

\begin{theorem}[\textit{Sched} $\Rightarrow$ \textit{State}]
\label{thm:sch-positional}
If a game respects a scheduler, then it respects a state-based scheduler.
\end{theorem}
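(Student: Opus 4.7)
The approach is to build a state-based scheduler $g$ by attaching an ordinal rank $\Phi$ to every state in a way that is witnessed by some single choice of player, and to invoke the given scheduler $f$ only to prove that every state actually gets ranked.

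First I would define $\Phi(s)$ by recursion on the better-response graph: set $\Phi(s) := 0$ if $s$ is a Nash equilibrium, and otherwise let $\Phi(s)$ be the least ordinal $\alpha$ such that some player $i \in \mathit{NBR}(s)$ satisfies $\Phi(s') < \alpha$ for every better response $s'$ of player $i$ at $s$; if no such $\alpha$ exists, leave $\Phi(s)$ undefined. Let $X$ denote the set of states on which $\Phi$ is undefined. Unpacking the definition, $s \in X$ implies that $s$ is not a Nash equilibrium and that for \emph{every} $i \in \mathit{NBR}(s)$ at least one better response of $i$ at $s$ belongs to $X$ (otherwise one could assign to $\Phi(s)$ the supremum of $\Phi(s')+1$ over the better responses of a suitable $i$).

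The key step is to show $X = \ES$. Suppose towards a contradiction that some $s^1 \in X$. Inductively, given an $f$-respecting improvement path $s^1, \LL, s^k$ with every $s^j \in X$, put $i := f(s^1, \LL, s^k)$. Then $i \in \mathit{NBR}(s^k)$ by the definition of a scheduler, and because $s^k \in X$ the player $i$ admits a better response $s^{k+1} \in X$ at $s^k$; pick one and extend the path. Iterating produces an infinite $f$-respecting improvement path, contradicting the hypothesis that $G$ respects $f$. Hence $X = \ES$, and defining $g(s)$ to be a witness $i \in \mathit{NBR}(s)$ in the definition of $\Phi(s)$ yields a state-based scheduler along which every $g$-respecting improvement path is strictly $\Phi$-decreasing and therefore finite.

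The main obstacle I anticipate is the characterization of membership in $X$: one must see that failure to define $\Phi(s)$ forces \emph{every} player in $\mathit{NBR}(s)$ (not merely some) to have a better response remaining in $X$. That clause is exactly what permits the construction to follow the choice $f(s^1, \LL, s^k)$ without getting stuck, and it is the only place where the inductive definition of $\Phi$ must be handled with care. The remainder of the argument is a matter of unpacking the definitions of scheduler, $\mathit{NBR}$, and ``$G$ respects $f$''.
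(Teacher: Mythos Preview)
Your proposal is correct and follows essentially the same route as the paper: the paper builds sets $Y_0 \subseteq Y_1 \subseteq \cdots$ with $Y_0$ the Nash equilibria and $Y_{k+1} = Y_k \cup \{s \mid \exists i\, \forall s'\,(s \betredge{i} s' \Rightarrow s' \in Y_k)\}$, which is exactly your rank function $\Phi$ in disguise (your $X$ is their $S \setminus Y$, and the contradiction argument showing $X=\ES$ is their Claim~1). Your use of ordinals from the outset neatly subsumes the paper's remark that the infinite-game case requires a transfinite version of the same construction.
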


\begin{proof}
Fix a finite strategic game $\strgame = (S_1, \ldots, S_n, p_1, \ldots, p_n)$.
Let $Y :=\cup_{k \in \nat} Y_k$, where

\begin{itemize}

\item $Y_0 :=\{\strprofile \in S \mid \mbox{$s$ is a Nash equilibrium}\}$,

\item $Y_{k+1} :=Y_k \cup \{\strprofile \mid \exists i \: \forall \strprofile' (\strprofile \betredge{i} \strprofile' \Rightarrow \strprofile' \in Y_k)\}$.
\end{itemize}

For each $\strprofile \in Y_{k+1} \setminus Y_k$, let
$
\mbox{$\sch_{\st}(\strprofile) :=i$, where $i$ is such that $\forall
  \strprofile' (\strprofile \betredge{i} \strprofile' \to \strprofile'
  \in Y_k)$.}
$

\smallskip

\NI If $Y=S$, then we can view $\sch_{\st}$ as a state-based scheduler.
We now prove two claims concerning the set $Y$ and the scheduler 
$\sch_{\st}$.

\begin{myclaim}
\label{lm:sch-positional2}
If a strategic game $\strgame$ respects a scheduler, then $Y=S$.
\end{myclaim}

\begin{proof}
Suppose that $\strgame$ respects a scheduler $\sch$. Assume by contradiction that
$Y \neq S$. Take $\strprofile^0 \in S \setminus Y$. 
Suppose $\sch(\strprofile^0)=i_1$. 
By the definition of $Y$
there exists a joint strategy $\strprofile^{1}$ such that $\strprofile^0
\betredge{i_1} \strprofile^{1}$ and $\strprofile^{1} \in S \setminus
Y$. Suppose $\sch((\strprofile^0, \strprofile^{1}))=i_2$. Again, by the
definition of $Y$ there exists a joint strategy $\strprofile^{2}$ such that
$\strprofile^{1} \betredge{i_2} \strprofile^{2}$
and $\strprofile^{2} \in S \setminus Y$. Iterating this argument
we construct an infinite improvement path which respects $\sch$,
which yields a contradiction.
\HB
\end{proof}

\begin{myclaim}
\label{lm:sch-positional1}
If for a strategic game $\strgame$ we have $Y=S$, then $\strgame$ respects $\sch_{\st}$.
\end{myclaim}
\begin{proof}
We prove by induction on $k$ that all improvements paths that start in a joint
strategy from $Y_k$ and respect $\sch_{\st}$ are finite.

The claim holds vacuously for $k = 0$. Suppose it holds for some $k \geq 0$. 
Take some $\strprofile \in Y_{k+1} \setminus Y_k$ and an 
improvement path $\xi$ that respects $\sch_{\st}$ and starts in 
$\strprofile$.
Suppose that $\sch_{\st}(\strprofile) :=i$. Then for some $\strprofile'$, 
$\strprofile \betredge{i} \strprofile'$ is the first step in $\xi$.
By the definition of $\sch_{\st}$, 
$\strprofile' \in Y_k$, so by the induction hypothesis $\xi$ is finite.
\HB
\end{proof}

Suppose now that a game $\strgame$ respects a scheduler. By Claim
\ref{lm:sch-positional2} $Y=S$, so $\sch_{\st}$ is a state-based
scheduler.  By Claim \ref{lm:sch-positional1}, $\strgame$ respects
$\sch_{\st}$.

The proof for arbitrary games is a minor modification of the above
argument and uses a transfinite induction to construct the set of
strategies $Y$ and to reason about it. We omit the details.
\HB
\end{proof}

The above proof uses a construction similar to the one used to compute the
winning regions of reachability games, see, e.g., \cite[page 104]{Gra11}.

\begin{theorem}[$\textit{Sched}_{\textit{BR}}$ $\Rightarrow$ $\textit{State}_{\textit{BR}}$]
\label{thm:sch-positionalBR}
If a game respects a BR-scheduler, then it respects a state-based BR-scheduler.
\end{theorem}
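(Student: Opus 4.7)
The plan is to mirror the proof of Theorem \ref{thm:sch-positional} verbatim, simply replacing the better-response relation $\betredge{i}$ by the best-response relation $\bredge{i}$ throughout, and replacing the phrase ``improvement path'' by ``BR-improvement path''. The structural reason this works is that the previous proof never used any property specific to better-response moves beyond the following: from any joint strategy that is not a Nash equilibrium some player can deviate, and a scheduler is required to pick such a player. Both of these facts hold equally well in the BR-setting, since in a non-equilibrium some player has a strictly improving deviation to a best response, and a BR-scheduler by definition selects a player in $\mathit{NBR}(s)$.

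Concretely, I would fix a finite game $\strgame$ and define $Y := \bigcup_{k \in \nat} Y_k$ where
\[
Y_0 := \{\strprofile \in S \mid \strprofile \text{ is a Nash equilibrium}\},
\]
\[
Y_{k+1} := Y_k \cup \{\strprofile \mid \exists i\ \forall \strprofile'\ (\strprofile \bredge{i} \strprofile' \Rightarrow \strprofile' \in Y_k)\}.
\]
For each $\strprofile \in Y_{k+1} \setminus Y_k$ I would pick a witness $i$ and set $\sch_{\st}(\strprofile) := i$. Note that such an $i$ necessarily lies in $\mathit{NBR}(\strprofile)$: if $i$ already played a best response in $\strprofile$, there is no $\strprofile'$ with $\strprofile \bredge{i} \strprofile'$ (one cannot strictly improve by moving to another best response) and so the quantifier would be vacuously satisfied, forcing $\strprofile \in Y_0$ against the assumption $\strprofile \notin Y_k$; this legitimizes $\sch_{\st}$ as a BR-scheduler once $Y = S$.

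Next I would prove the two claims. For the first, suppose $\strgame$ respects some BR-scheduler $\sch$ but $Y \neq S$. Starting from any $\strprofile^0 \in S \setminus Y$, the BR-scheduler picks some $i_1 = \sch(\strprofile^0) \in \mathit{NBR}(\strprofile^0)$, and by the definition of $Y_{k+1}$ some best-response deviation $\strprofile^0 \bredge{i_1} \strprofile^1$ stays outside $Y$. Iterating builds an infinite BR-improvement path respecting $\sch$, a contradiction. For the second, I would argue by induction on $k$ that every BR-improvement path starting in $Y_k$ that respects $\sch_{\st}$ is finite: the base case $k=0$ is vacuous, and for $\strprofile \in Y_{k+1} \setminus Y_k$ the first step $\strprofile \bredge{\sch_{\st}(\strprofile)} \strprofile'$ lands in $Y_k$ by construction, so the induction hypothesis applies.

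The only genuinely new wrinkle (not an obstacle, but worth noting) is the observation above that $\sch_{\st}$ is a well-defined BR-scheduler in the sense required, i.e.\ that its chosen player did not select a best response; everything else is a routine transcription of the earlier argument. For the general (non-finite) case I would indicate, as the authors do, that the same construction with transfinite induction on the ordinal at which a joint strategy enters $Y$ yields the result, and omit the details.
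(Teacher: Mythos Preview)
Your proposal is correct and follows exactly the paper's approach: the paper's own proof is the single sentence that the argument of Theorem~\ref{thm:sch-positional} goes through verbatim with $\bredge{i}$ in place of $\betredge{i}$, and you have carried out precisely that substitution. Your extra remark that the witness $i$ must lie in $\mathit{NBR}(\strprofile)$ is a welcome clarification the paper omits; note, though, that the vacuous quantifier forces $\strprofile \in Y_1$ (not $Y_0$), yielding a contradiction only for $k \geq 1$---the cleanest way to cover $k=0$ as well is simply to restrict the existential in the definition of $Y_{k+1}$ to $i \in \mathit{NBR}(\strprofile)$ from the outset, after which both claims go through unchanged.
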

\begin{proof}
  The proof is the same as that of Theorem~\ref{thm:sch-positional} with
the relation $\bredge{i}$ used instead of 
$\betredge{i}$.
\HB
\end{proof}

\begin{theorem}[$\textit{Sched}_{\textit{BR}}$ $\Rightarrow$ \textit{Sched}]
\label{thm:sch-BR}
If a finite game respects a BR-scheduler, then it respects a scheduler.
\end{theorem}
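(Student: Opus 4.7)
The plan is to apply Theorem~\ref{thm:sch-positionalBR} to obtain a state-based BR-scheduler $g$ of $\strgame$, and then build from $g$ a history-based scheduler $f'$ that intermittently delegates to $g$. Specifically, on a history $(\strprofile^0,\ldots,\strprofile^k)$ ending in a non-Nash joint strategy, $f'$ returns $g(\strprofile^k)$ if either $k=0$ or the previous step $\strprofile^{k-1}\betredge{i}\strprofile^k$ was actually a best-response move (i.e.\ $\strprofile^k_i$ is a best response to $\strprofile^{k-1}_{-i}$); otherwise $f'$ returns the same player $i$ that just moved, so as to give him another chance to eventually select a best response. Intuitively, $f'$ \emph{forces} each selected player to keep improving until he finally picks a best response, at which point we hand control back to $g$.

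First I would verify that $f'$ is a legitimate scheduler, i.e.\ that its output always lies in $\mathit{NBR}(\strprofile^k)$. In the first branch this holds by the definition of $g$. In the second branch, the previous move of $i$ was a strict improvement but not to a best response, so $i$ still has a strictly better strategy available at $\strprofile^k$, whence $i\in\mathit{NBR}(\strprofile^k)$.

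Next I would argue that every improvement path $\rho = (\strprofile^0, \strprofile^1, \ldots)$ respecting $f'$ is finite. By construction $f'$ partitions $\rho$ into consecutive blocks in which only a single player $j$ moves; the block ends precisely at the first step in which $j$ selects a best response (since only then does $f'$ switch). Within one block only $j$'s strategy changes and $p_j$ is strictly increasing along it, so by finiteness of $S_j$ each block has at most $|S_j|$ steps. Letting $t^m$ denote the state at the start of the $m$-th block, we have $t^{m+1}_{-j}=t^m_{-j}$ and $t^{m+1}_j$ is a best response to $t^m_{-j}$, where $j=g(t^m)$. Hence $t^m \bredge{g(t^m)} t^{m+1}$, so $t^0, t^1, t^2, \ldots$ is a BR-improvement path respecting $g$. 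By the hypothesis that $\strgame$ respects $g$, this BR-improvement path must be finite; combined with finite block lengths this makes $\rho$ finite, contradicting the assumption of infiniteness.

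The main obstacle is designing $f'$ so that it correctly recognizes the end of each block from the history alone, and verifying that the contracted sequence $(t^m)$ really is a BR-improvement path respecting $g$. The latter uses crucially that no player other than $g(t^m)$ moves during the $m$-th block, so that their strategies remain frozen and $t^{m+1}_{g(t^m)}$ is a best response to exactly $t^m_{-g(t^m)}$. The finiteness of each individual block is where the hypothesis that $\strgame$ is finite enters.
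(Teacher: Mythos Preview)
Your proposal is correct and takes essentially the same approach as the paper: keep scheduling the same player until he selects a best response, then delegate to the BR-scheduler, so that the contracted sequence of block endpoints is a BR-improvement path respecting it and hence finite. The only difference is that you first invoke Theorem~\ref{thm:sch-positionalBR} to obtain a state-based $g$ (so delegation is simply evaluating $g$ at the current state), whereas the paper works directly with the given BR-scheduler $f_{\textit{BR}}$ and feeds it the best-response-contracted history $[\xi]_{\textit{BR}}$; your preprocessing step makes the bookkeeping slightly lighter but does not change the argument.
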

\begin{proof}
The idea of the proof is as follows. Suppose that a game respects a BR-scheduler $f_{\textit{BR}}$.
We construct then a scheduler $f$ inductively by repeatedly scheduling the same
player until he plays a best response, and subsequently scheduling the same player
as $f_{\textit{BR}}$ does.

To make it precise we need some notation.  
We call an initial prefix of an improvement path an \bfe{improvement sequence}.
To indicate the deviating players at each step of an improvement sequence
$(s^0, \LL, s^k)$ 
we shall write it alternatively as 
\[
s^0 \betredge{i_1} s^1 \betredge{i_2} \LL \betredge{i_k} s^k.
\]

Given an improvement sequence $\xi$ we denote by $[\xi]_{\textit{BR}}$ the
subsequence of it obtained by deleting the joint strategies that do
not result from a selection of a best response.  In general
$[\xi]_{\textit{BR}}$ is not a improvement sequence (for example, it does not
need to be a maximal sequence), but it is if every maximal
subsequence of it of the form $s^0 \betredge{i} s^1 \betredge{i} \LL
\betredge{i} s^m$ ends with a selection of a best response.

Given a finite sequence of joint strategies $\xi$ we denote its last
element by $last(\xi)$ and denote the extension of $\xi$ by a joint
strategy $s$ by $\xi, s$.  We define the desired scheduler $f$
inductively by the length of the sequences.  For a sequence of length
1, so a joint strategy that is not a Nash equilibrium, we put
\[
f(s) := f_{\textit{BR}}(s).
\]

Suppose now that we defined $f$ on all sequences of length $k$.
Consider a sequence $\xi, s$ of length $k+1$. If $\xi, s$ is not an
improvement path or $last(\xi) \betredge{f(\xi)} s$ does not hold, then we
define $f(\xi, s)$ arbitrarily. Otherwise we put
\[
f(\xi, s) :=
\begin{cases}
  f_{\textit{BR}}([\xi, s]_{\textit{BR}}) & \textrm{if $s_i$ is a best response to $s_{-i}$} \\
  f(\xi)          & \textrm{otherwise}
\end{cases}
\]

We claim that $G$ respects the scheduler $f$. To see it take an
improvement path $\xi$ that respects $f$.  By the definition of $f$,
$[\xi]_{\textit{BR}}$ is an improvement sequence that respects $f_{\textit{BR}} $.  By
assumption $[\xi]_{\textit{BR}}$ is finite, so $\xi$ is finite, as well.
\HB
\end{proof}

Note that the above theorem fails to hold for infinite games.  Indeed,
consider a two player game $(\{0\}, [0,1], p_1, p_2)$, where $[0,1]$
stands for the real interval $\{r \mid 0 \leq r \leq 1\}$ and
$p_1(0,s_2) = p_2(0,s_2) := s_2$.  Then 1 is a unique best response of
player 2 to the strategy 0, so this game respects the unique
BR-scheduler. However, it does not respect the unique scheduler.

\section{Two player games}
\label{sec:two}

For two player games more implications hold.

\begin{proposition}[$\textit{Sched}$ $\Rightarrow$ \textit{FBRP}]
  If a two player game respects a scheduler, then
  every BR-improvement path is finite.
\end{proposition}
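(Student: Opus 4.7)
The plan is a proof by contradiction. Assume that the two player game $G$ respects a scheduler $f$ but admits an infinite BR-improvement path $\xi = (s^0, s^1, s^2, \LL)$, where player $i_k$ is the deviator in the step from $s^{k-1}$ to $s^k$. From $\xi$ I would extract an infinite improvement path that respects $f$, contradicting the hypothesis on $G$.

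The key observation is that in a two player game the deviators along a BR-improvement path must alternate. Indeed, for every $k \geq 1$ player $i_k$ has just moved to a best response, so $i_k \in \brset(s^k)$, and since $|N| = 2$ this forces $\nbrset(s^k) \sse \{3 - i_k\}$. Because $\xi$ continues, player $i_{k+1}$ has a strictly better response at $s^k$, hence $i_{k+1} \in \nbrset(s^k)$. Combining these facts yields $\nbrset(s^k) = \{i_{k+1}\}$. Since any scheduler must select a player from the set $\nbrset$ of the current state, this singleton identity forces $f(s^1, \LL, s^k) = i_{k+1}$ for every $k \geq 1$, regardless of how $f$ depends on history. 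Therefore the infinite tail $(s^1, s^2, \LL)$ is a sequence of strict improvements that respects $f$; being infinite it is trivially maximal, hence an improvement path in the sense of the paper. This contradicts the assumption that $G$ respects $f$.

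The main (mild) obstacle is purely definitional: one must notice that the mere membership constraint $f \in \nbrset$ suffices to pin down the schedule along the tail, so that Theorem~\ref{thm:sch-positional} is not needed; and one must confirm that an infinite sequence of strict improvements automatically satisfies the maximality clause in the definition of an improvement path. Note also that the starting state $s^0$ plays no role in the argument, which is why it is harmless that $\nbrset(s^0)$ may contain both players.
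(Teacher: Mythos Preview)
Your proof is correct and follows essentially the same approach as the paper: both arguments rest on the observation that after the first step of a BR-improvement path in a two player game the set $\nbrset(s^k)$ is a singleton, so the tail from $s^1$ onward respects \emph{every} scheduler. The only cosmetic difference is that you phrase it as a contradiction from an assumed infinite path, whereas the paper argues directly that the tail $\eta$ is finite; the underlying idea is identical.
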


\begin{proof}
Suppose that a two player game $G$ respects a scheduler. Note that
the best response graph of $G$ has the property that for every node
$s$ that is not a source node, the set $\mathit{NBR}(s)$ has at most
one element. Take a BR-improvement path $\xi$. Suppose that $(s, s')$
is the first step in $\xi$ and that $\eta$ is the suffix of $\xi$ that
starts at $s'$.  Then every element $s''$ of $\eta$ is such that the
set $\mathit{NBR}(s'')$ has at most one element.  Hence $\eta$
respects any scheduler and consequently is finite. So $\xi$ is finite,
as well.  \HB
\end{proof}

We also have the following examples showing that other implications do
not hold.

\begin{myexample}[\textit{Local} $\not \Rightarrow$ \textit{FIP}]
\rm
Consider the following game

\begin{center}
 \begin{game}{2}{3}
     & $L$    & $C$  & $R$\\
$T$   &$1,0$   &$0,1$ &$0,2$\\
$B$   &$0,1$   &$1,0$ &$0,0$\\
\end{game}
\end{center}

It respects the local scheduler $f$ for which $f(\{1,2\}) = 2$.
However, this game does not have the FIP. 
\HB
\end{myexample}

\begin{myexample}[\textit{State} $\not \Rightarrow$ \textit{Set}, 
\textit{FBRP} $\not \Rightarrow$ \textit{Set}]
\label{exa:1}
\rm 
Consider the game 

\begin{center}
  \begin{game}{3}{3}
      & $A$    & $B$  & $C$\\
$A$   &$2,2$   &$2,0$ &$0,1$\\
$B$   &$0,2$   &$1,1$ &$1,0$\\
$C$   &$1,0$   &$0,1$ &$0,0$
\end{game} 
\end{center}

In Figure~\ref{fig:1} we display
the better response graph and the best response graphs
associated with it.  
 
\VV

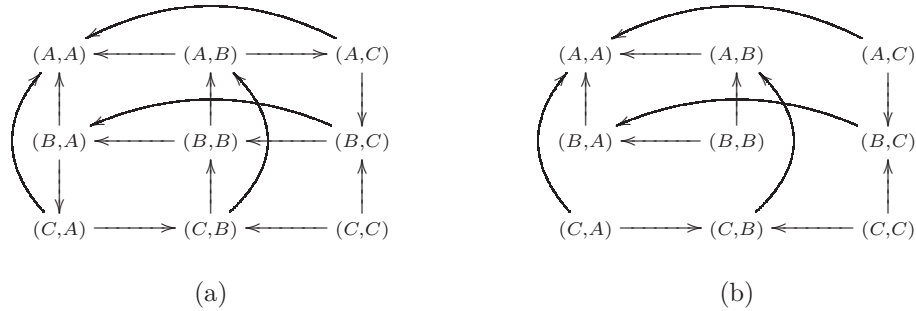
\begin{figure}[ht]
\centering
\begin{tabular}{ccc}
$
\def\objectstyle{\scriptstyle}
\def\labelstyle{\scriptstyle}
\xymatrix@R=20pt @C=30pt{
(A,A) &(A,B) \ar[l] \ar[r] &(A,C) \ar[d] \ar@/_1.5pc/[ll]\\
(B,A) \ar[u] \ar[d]&(B,B) \ar[l] \ar[u] &(B,C) \ar[l] \ar@/_1.3pc/[ll] \\
(C,A)  \ar@/^1.5pc/[uu] \ar[r]&(C,B) \ar@/_1.8pc/[uu] \ar[u] &(C,C) \ar[u] \ar[l]\\
}$
& ~~~~~~~~~~~&
$
\def\objectstyle{\scriptstyle}
\def\labelstyle{\scriptstyle}
\xymatrix@R=20pt @C=30pt{
(A,A) &(A,B)  \ar[l] &(A,C) \ar[d] \ar@/_1.5pc/[ll]\\
(B,A) \ar[u] &(B,B) \ar[l] \ar[u] &(B,C)  \ar@/_1.3pc/[ll] \\
(C,A)  \ar@/^1.5pc/[uu] \ar[r]&(C,B) \ar@/_1.8pc/[uu] &(C,C) \ar[u] \ar[l]\\
}$
\\
\\
(a)  & &(b)\\
 \end{tabular}
\caption{\label{fig:1}Better response and best response graphs}
\end{figure}

First we show that this game 
respects a state-based scheduler. To define it suffices to consider
the joint strategies in 
which none of the players selected a best response. These are
$(A,C), \ (C,A)$, \ $(B,B)$ and $(C,C)$.
We put
\[
f(A,C) := 2, \ f(C,A) := 1, \ f(B,B) := 1, \ f(C,C) := 1.
\]
($f(B,B) := 2$ and $f(C,C) := 2$ would do, as well.)

Using Figure~\ref{fig:1}(a) it is
easy to check that any improvement path that respects this scheduler
ends in $(A,A)$.  Further, the graph given in Figure~\ref{fig:1}(b) is
acyclic, that is, this game has the FBRP.

However, this game does not respect any set-based scheduler.
Indeed, suppose otherwise. Then such a scheduler is defined using a choice function
$g$. If $g(\{1, 2\}) = 1$, then the infinite improvement path 
$((B,B), \ (A,B), \ (A,C), \ (B,C))^{*}$ respects this scheduler.
In turn, if $g(\{1, 2\}) = 2$, then the 
infinite improvement path $((B,B), \ (B,A), \ (C,A), \ (C,B))^{*}$ 
respects this scheduler.
\HB
\end{myexample}

\begin{myexample}[\textit{BRWA} $\not\Rightarrow$ $\textit{Sched}_{\textit{BR}}$]
\rm
Consider the following game

\begin{center}
 \begin{game}{2}{3}
     & $L$    & $C$  & $R$\\
$T$   &$0,1$   &$1,0$ &$0,1$\\
$B$   &$1,0$   &$0,1$ &$0,0$\\
\end{game}
\end{center}

It is BR-weakly acyclic. However, this game respects no BR-scheduler.
Indeed, there is only one scheduler for this game and the following
infinite improvement path respects it:
\[((T,L), \ (B,L), \ (B,C), \ (T, C))^*.
\] 

\HB
\end{myexample}

\begin{myexample}[\textit{WA} $\not\Rightarrow$ \textit{BRWA}]
\rm
Consider now the following game

\begin{center}
 \begin{game}{3}{3}
     & $L$    & $C$  & $R$\\
$T$   &$0,2$   &$1,0$ &$0,1$\\
$M$   &$1,0$   &$0,1$ &$0,0$\\
$B$   &$0,0$   &$0,0$ &$1,0$\\
\end{game}
\end{center}

It is weakly acyclic. However, it is not BR-weakly acyclic, 
because the infinite BR-improvement path 
$((T,L), \ (M,L), \ (M,C), \ (T, C))^*$ is a unique BR-improvement path starting at $(T,L)$.
\HB
\end{myexample}

By definition if a two player game respects a set-based scheduler,
then it respects a local scheduler. So putting together the results
obtained so far we get the implications and equivalences
depicted in Figure~\ref{fig:impl2}. All implications are proper.


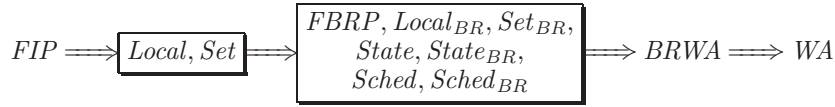
\begin{figure}[ht]
\centering
$
\def\objectstyle{\small}
\def\labelstyle{\small}
\xymatrix@C=20pt{
\mathit{FIP} \ar@{=>}[r] &*+[F-,] {\mathit{Local}, \mathit{Set}} \ar@{=>}[r] 
& *+[F-,] \txt{$\mathit{FBRP}, \mathit{Local}_{\bestr}, \mathit{Set}_{\bestr}$, \\ 
$\mathit{State}, \mathit{State}_{\bestr}$,\\
$\mathit{Sched}, \mathit{Sched}_{\bestr}$} \ar@{=>}[r] 
& \mathit{BRWA} \ar@{=>}[r] & \mathit{WA}
}$
\caption{\label{fig:impl2}Classification of two player finite weakly acyclic games}
\end{figure}

As an illustration consider a two player congestion game with
player-specific payoff functions (actually a class of games) analyzed
in \cite[page 115]{Mil96}. Each player has three strategies numbered
1, 2, and 3. We omit the description of the game and mention only its
relevant characteristics. The game has two Nash equilibria, $(1,2)$
and $(2,1)$, and an infinite improvement path $((3,2), \ (2,2), \
(2,3),$ $(1,3), \ (1,1), \ (3,1))^*$. Additionally $p_1(1,2) > p_1(1,1)$,
$p_1(1,2) > p_1(2,2)$, and $p_2(2,1) > p_2(2,3)$.

This implies that the graph depicted in Figure~\ref{fig:Mil} is a
subgraph of the better response graph of the game.  The dotted edges
are implied by the fact that $(1,2)$ and $(2,1)$ are Nash equilibria,
the continuous edges correspond to the infinite improvement path,
while the dashed edges are implied by the mentioned properties of the
payoff functions. The relation between the nodes $(2,1)$ and $(3,1)$
and the status of edges involving $(3,3)$ are unspecified, though the
edge $(2,1) \to (3,1)$ is excluded since $(2,1)$ is a Nash
equilibrium.

Note that this game respects a set-based scheduler $g$ such that
$g(\{1,2\}) = 2$. Indeed, this choice allows us to `exit' the infinite
improvement path both at $(1,1)$, $(2, 2)$ and $(2,3)$.  One can also
check that this choice properly takes care of any legal completion of
the graph depicted in Figure~\ref{fig:Mil} to a possible better
response graph.  For instance, addition of the edges $(3,2) \to (3,3)$
and $(3,3) \to (2,3)$ would create another infinite improvement path
that would be `exited' at $(1,1)$.  We conclude that each game with
the above characteristics belongs to $\textit{Set} \setminus
\textit{FIP}$.  Figure~\ref{fig:impl2} implies that each such game
also belongs to $\textit{FBRP}$.

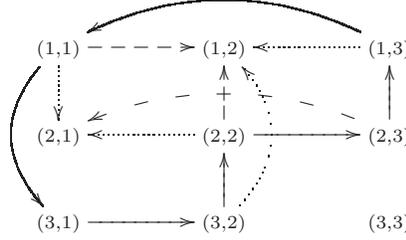
\begin{figure}[ht]
\centering
$
\def\objectstyle{\scriptstyle}
\def\labelstyle{\scriptstyle}
\xymatrix@R=20pt @C=40pt{
(1,1) \ar@{-->}[r] \ar@{.>}[d] \ar@/_1.5pc/[dd]&(1,2) &(1,3) \ar@{.>}[l] \ar@/_1.5pc/[ll]\\
(2,1) &(2,2) \ar@{.>}[l] \ar@{-->}[u] \ar[r] &(2,3) \ar[u] \ar@{-->}@/_1.3pc/[ll] \\
(3,1) \ar[r] &(3,2) \ar@{.>}@/_1.5pc/[uu] \ar[u] &(3,3)\\
}$
\caption{\label{fig:Mil} Subgraph of a better response graph}
\end{figure}

\section{Remaining implications}
\label{sec:remaining}

To deal with the remaining implications we need games with at least
three players.

\begin{myexample}[$\textit{Local}_{\textit{BR}}$ $\not \Rightarrow$ $\textit{FBRP}$]
\label{exa:LnF}
\rm 
We consider a three player game in which every player has two
strategies, 0 and 1, and such that its best response graph is as shown
in Figure~\ref{fig:3a}.

\begin{figure}[ht]
\centering
$
\def\objectstyle{\scriptstyle}
\def\labelstyle{\scriptstyle}
\xymatrix@!0@C=30pt@R=30pt{
& (0,1,1) \ar@{--}'[d][dd] \ar@{->}[dl]
&& (1,1,1) \ar@{->}[ll] \\
(0,0,1) \ar@{->}[rr] & &(1,0,1) \ar@{->}[ur]\ar@{->}[dd]\\
& (0,1,0) \ar@{--}'[r][rr] && (1,1,0) \ar@{--}[uu]\\
(0,0,0) \ar@{--}[rr] \ar@{--}[uu] \ar@{--}[ur] &&(1,0,0) \ar@{--}[ur]
}
$
\caption{\label{fig:3a}A best response graph}
\end{figure}
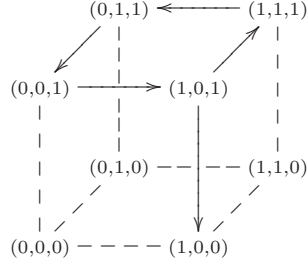

To define the corresponding payoff functions we just need
to interpret each directed edge $(a,b,c) \to (a,b',c)$ as
the statement $p_2(a,b,c) = 0$ and  $p_2(a,b',c) = 1$,
each dotted edge $(a,b,c) - - (a,b',c)$ as
the statement $p_2(a,b,c) = 0$ and  $p_2(a,b',c) = 0$,
and similarly for the other edges.
This yields
\[
p_1(0,1,1) = p_1(1,0,1) = p_2(0,0,1) = p_2(1,1,1) = p_3(1,0,0) := 1,
\]
with the remaining payoffs equal to 0.

This game respects any local BR-scheduler $f$ for which $f(\{2,3\}) = 3$.
However, this game does not have the FBRP. 
\HB
\end{myexample}

\begin{myexample}[$\textit{Set}$ $\not \Rightarrow$ $\textit{Local}$, 
$\textit{Set}_{\textit{BR}}$ $\not \Rightarrow$ $\textit{Local}_{\textit{BR}}$]
\label{exa:2}
\rm We first construct four three player games.  Every player in each
game has two strategies, 0 and 1. The better response graphs of these
games are depicted in Figure~\ref{fig:4}. To define the corresponding
payoff functions we proceed as in Example~\ref{exa:LnF}.

\begin{figure}[ht]
\centering
\begin{tabular}{cc}
$
\def\objectstyle{\scriptstyle}
\def\labelstyle{\scriptstyle}
\xymatrix@!0@C=30pt@R=30pt{
& (0,1,1) \ar@{--}'[d][dd] \ar@{->}[dl]
&& (1,1,1) \ar@{->}[ll] \\
(0,0,1) \ar@{->}[rr] & &(1,0,1) \ar@{->}[ur]\ar@{->}[dd]\\
& (0,1,0) \ar@{--}'[r][rr] && (1,1,0) \ar@{--}[uu]\\
(0,0,0) \ar@{--}[rr] \ar@{--}[uu] \ar@{--}[ur] &&(1,0,0) \ar@{--}[ur]
}
$
&
$
\def\objectstyle{\scriptstyle}
\def\labelstyle{\scriptstyle}
\xymatrix@!0@C=30pt@R=30pt{
& (0,1,1) \ar@{--}'[d][dd] \ar@{--}[dl]
&& (1,1,1) \ar@{--}[ll] \ar@{->}[dl]\\
(0,0,1)  & &(1,0,1) \ar@{->}[ll] \ar@{->}[dd]\\
& (0,1,0) \ar@{--}'[r][rr] && (1,1,0) \ar@{->}[uu]\\
(0,0,0) \ar@{--}[rr] \ar@{--}[uu] \ar@{--}[ur] &&(1,0,0) \ar@{->}[ur]
}
$ \\ \\
$
\def\objectstyle{\scriptstyle}
\def\labelstyle{\scriptstyle}
\xymatrix@!0@C=30pt@R=30pt{
& (0,1,1) \ar@{--}'[d][dd] \ar@{--}[dl]
&& (1,1,1) \ar@{--}[ll]\\
(0,0,1)  \ar@{->}[dd] & &(1,0,1) \ar@{->}[ll] \ar@{->}[ur]\\
& (0,1,0) \ar@{--}'[r][rr] && (1,1,0) \ar@{--}[uu]\\
(0,0,0) \ar@{->}[rr] \ar@{--}[ur] &&(1,0,0) \ar@{->}[uu] \ar@{--}[ur]
}
$ 
&
$
\def\objectstyle{\scriptstyle}
\def\labelstyle{\scriptstyle}
\xymatrix@!0@C=30pt@R=30pt{
& (0,1,1) \ar@{--}'[d][dd] \ar@{--}[dl]
&& (1,1,1) \ar@{--}[ll]\\
(0,0,1)  \ar@{--}[dd] & &(1,0,1) \ar@{->}[ll] \ar@{->}[ur] \ar@{->}[dd]\\
& (0,1,0) \ar@{--}'[r][rr] && (1,1,0) \ar@{--}[uu]\\
(0,0,0) \ar@{--}[rr] \ar@{--}[ur] &&(1,0,0)  \ar@{--}[ur]
}
$
\end{tabular}
\caption{\label{fig:4}Four better response graphs}
\end{figure}
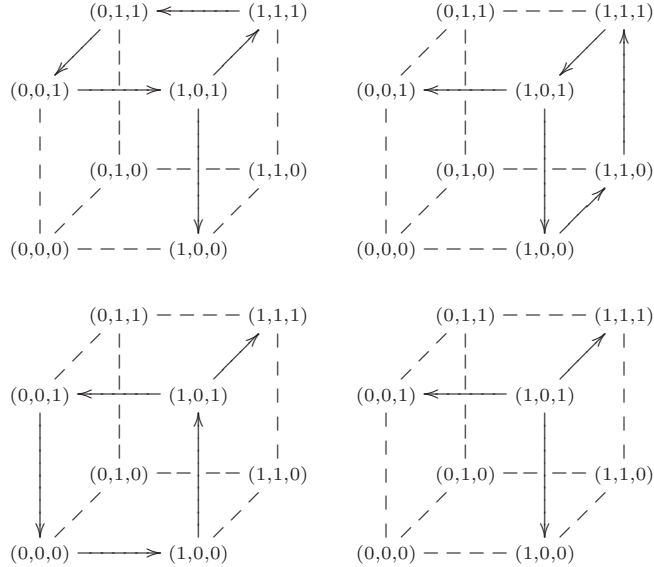

Next, we make for each player his strategy sets in these four games
mutually disjoint by renaming the strategies 0 and 1 in the $i$th game
to $0_i$ and $1_i$.  Denote the resulting games by $G_1, \LL, G_4$.
Let $G := G_1 \uplus G_2 \uplus G_3 \uplus G_4$.  In the better
response graph of $G$ there are precisely three loops that correspond
to the ones depicted in Figure~\ref{fig:4}.  Using this figure one can
check that $G$ respects the following set-based scheduler:
\[
f(\{2, 3\}) := 3, \ f(\{1,3\}) := 1, \ f(\{1,2\}) := 2, 
\]
with an arbitrary value for the input $\{1,2,3\}$.

However, $G$ does not respect any local scheduler, since to `exit' each of these three loops 
each set-based scheduler needs
to make the above selections, and then no selection for the input $\{1,2,3\}$
can make the scheduler local.

The above reasoning also holds for the BR-schedulers, since every
better response in $G$ is also a best response, as all payoffs in $G$
are either 0 or 1.  
\HB
\end{myexample}

\begin{myexample}[$\textit{State}_{\textit{BR}}$ $\not \Rightarrow$ $\textit{Set}_{\textit{BR}}$]
\label{exa:3}
\rm
We first construct two three player games.  Every player in each game
has two strategies, 0 and 1. The best response graphs of these games
are depicted in Figure~\ref{fig:5}.
The first one coincides with the first better response graph of Figure~\ref{fig:4}.

\begin{figure}[ht]
\centering
\begin{tabular}{cc}
$
\def\objectstyle{\scriptstyle}
\def\labelstyle{\scriptstyle}
\xymatrix@!0@C=30pt@R=30pt{
& (0,1,1) \ar@{--}'[d][dd] \ar@{->}[dl]
&& (1,1,1) \ar@{->}[ll] \\
(0,0,1) \ar@{->}[rr] & &(1,0,1) \ar@{->}[ur]\ar@{->}[dd]\\
& (0,1,0) \ar@{--}'[r][rr] && (1,1,0) \ar@{--}[uu]\\
(0,0,0) \ar@{--}[rr] \ar@{--}[uu] \ar@{--}[ur] &&(1,0,0) \ar@{--}[ur]
}
$
&
$
\def\objectstyle{\scriptstyle}
\def\labelstyle{\scriptstyle}
\xymatrix@!0@C=30pt@R=30pt{
& (0,1,1) \ar@{--}'[d][dd] \ar@{--}[dl]
&& (1,1,1) \ar@{--}[ll] \\
(0,0,1) \ar@{->}[rr] & &(1,0,1) \ar@{->}[ur]\ar@{->}[dd]\\
& (0,1,0) \ar@{--}'[r][rr] && (1,1,0) \ar@{--}[uu]\\
(0,0,0) \ar@{->}[uu] \ar@{--}[ur] &&(1,0,0) \ar@{->}[ll] \ar@{--}[ur]
}
$
\end{tabular}
\caption{\label{fig:5} Two best response graphs}
\end{figure}
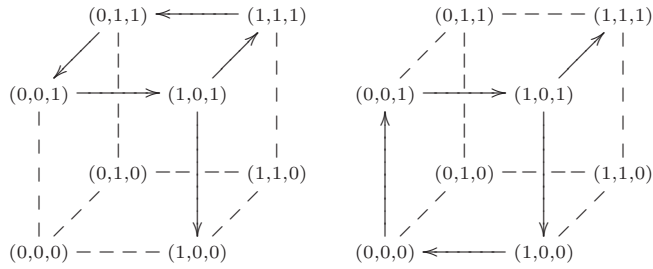

Denote these games by $G_1$ and $G_2$ and let $G := G_1 \uplus G_2$.
Then $G$ respects the following state-based BR-scheduler
(we use here the same renaming of the strategies as in Example~\ref{exa:2}):
\[
f(1_1, 0_1, 1_1) := 3, \ f(1_2, 0_2, 1_2) := 2.
\]
However, $G$ does not respect any set-based BR-scheduler, since in each of the above two joint
strategies the set of players who did not select a best response is the same (namely $\{2,3\}$),
and to `exit' each of the two loops in the corresponding best response graphs
one needs to take the above choices.
\HB
\end{myexample}

This brings us to the following final diagram depicted in Figure~\ref{fig:implf}.
All implications are proper.


\begin{figure}[ht]
\centering
$
\def\objectstyle{\small}
\def\labelstyle{\small}
\xymatrix@C=12pt@R=15pt{
& \mathit{FBRP} \ar@{=>}[dr]\\
\mathit{FIP} \ar@{=>}[ur] \ar@{=>}[dr] & & \mathit{Local}_{\bestr} \ar@{=>}[dr]\\
& \mathit{Local} \ar@{=>}[ur] \ar@{=>}[dr] & &\mathit{Set}_{\bestr} \ar@{=>}[r] 
  &*+[F-,] \txt{$\mathit{State}, \mathit{State}_{\bestr}$,\\ $\mathit{Sched},\mathit{Sched}_{\bestr}$} \ar@{=>}[r]
  & \mathit{BRWA} \ar@{=>}[r] & \mathit{WA}\\
& & \mathit{Set} \ar@{=>}[ur]\\
}$
\caption{\label{fig:implf}Classification of finite weakly acyclic games}
\end{figure}
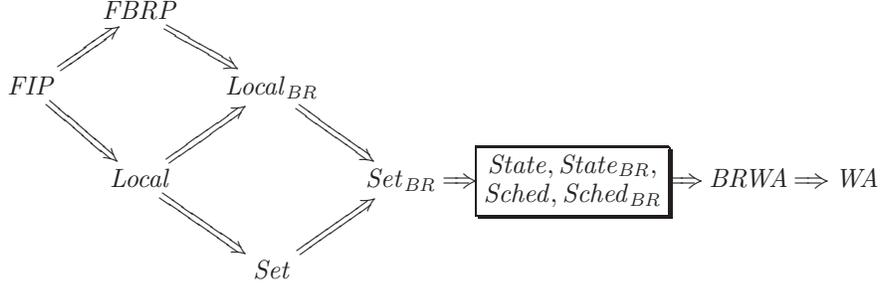

\section{Games that can be solved by the iterated elimination of NBR}

Weakly acyclic games are of natural interest because they have (pure)
Nash equilibria. In the literature another class of finite games has
been studied with the same property, namely those that can be solved
by the iterated elimination of strictly dominated strategies. In what
follows we focus on a more general class of finite games that can be
solved by the iterated elimination of never best responses (IENBR),
where we limit ourselves to never best responses to joint pure
strategies of the opponents.  

In this context the joint pure strategies of the opponents are usually
called \bfe{point beliefs} and the outcome of IENBR is the set of
\bfe{rationalizable joint strategies}, see \cite{Ber84}. In other
words, we focus on finite game that have a unique rationalizable joint
strategy w.r.t.~point beliefs.

Given a strategic game $\strgame=(S_1, \ldots, S_n, p_1, \ldots, p_n)$
we say that a strategy $\strprofile_i \in S_i$ is a \bfe{never best
  response} in $G$ if for every $\strprofile_{-i} \in S_{-i}$ there is a
strategy $\strprofile_i' \in S_i$ such that
$\payoff_i(\strprofile_i',\strprofile_{-i}) >
\payoff_i(\strprofile_i,\strprofile_{-i})$.

To better understand the concept of a never best response recall that
given a finite strategic game $\strgame=(S_1, \ldots, S_n, p_1,
\ldots, p_n)$ a \bfe{mixed strategy} for player $i$ is a probability
distribution over $S_i$. Further, recall that a strategy
$\strprofile_i \in S_i$ is \bfe{strictly dominated} by a mixed
strategy if there exists a mixed strategy $m_i$ such that for all
$\strprofile_{-i} \in S_{-i}$, $\payoff_i(m_i,\strprofile_{-i}) >
\payoff_i(\strprofile_i,\strprofile_{-i})$, where
$\payoff_i(m_i,\strprofile_{-i})$ is defined in the expected way.

Then it is straightforward to check that if $\strprofile_i$ is
strictly dominated by a mixed strategy then $\strprofile_i$ is a never
best response.  Also, the reverse implication does not hold.

Given a binary relation $\to$ we denote by $\tra$ its transitive reflexive closure.
In what follows we consider a specific relation $\myra$ between games defined as follows:
\[
G \myra G'
\]
where $G = (S_1, \ldots, S_n, p_1, \ldots, p_n)$,
$G' = (S'_1, \ldots, S'_n, p'_1, \ldots, p'_n)$ is a proper subgame of $G$, and
\[
\mbox{$\fa i \in \C{1, \ldots, n} \ \fa s_i \in S_i \setminus S'_i \ \neg \te s_{-i} \in S_{-i}$ $s_i$ is a best response to $s_{-i}$ in $S$}.
\]
That is, $G \myra G'$ when $G'$ results from $G$ by removing from it some strategies 
that are never best responses. 

If for some game $G'$ in which each player is left with exactly one
strategy we have $G \tra G'$, then we say that $G$ can be
solved by the iterated elimination of never best responses, in short
\bfe{can be solved by IENBR}.

The games that can be solved by IENBR are of interest because of the following
observation.

\begin{proposition}
  If a finite game can be solved by IENBR then it has a unique Nash
  equilibrium.
\end{proposition}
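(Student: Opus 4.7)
My plan is to exploit the fact that elimination reduces to a singleton joint strategy $s^*$, and show that $s^*$ is both a Nash equilibrium and the only one.

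Write the IENBR process as $G = G_0 \myra G_1 \myra \cdots \myra G_m$, where $G_m$ has exactly one strategy per player, yielding a joint strategy $s^*$. Denote the strategy sets of $G_k$ by $S^k_i$. By construction $s^* \in S^k_1 \times \cdots \times S^k_n$ for every $k \leq m$; in particular $s^*_{-i} \in S^k_{-i}$ for all $i$ and $k$.

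First I would show that $s^*$ is a Nash equilibrium of the original game $G$. Suppose, towards contradiction, that some player $i$ has a strategy $t^{(0)} \in S_i$ with $p_i(t^{(0)}, s^*_{-i}) > p_i(s^*_i, s^*_{-i})$. I construct inductively a sequence $t^{(0)}, t^{(1)}, t^{(2)}, \ldots$ in $S_i$ with strictly increasing payoffs against $s^*_{-i}$. Given $t^{(j)} \neq s^*_i$, since only $s^*_i$ survives at stage $m$, the strategy $t^{(j)}$ is eliminated at some stage $k$, i.e.\ $t^{(j)} \in S^k_i \setminus S^{k+1}_i$. By the definition of the elimination relation $\myra$, $t^{(j)}$ is a never best response in $G_k$, so applied to $s^*_{-i} \in S^k_{-i}$ there exists $t^{(j+1)} \in S^k_i$ with $p_i(t^{(j+1)}, s^*_{-i}) > p_i(t^{(j)}, s^*_{-i})$. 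Note $t^{(j+1)} \neq s^*_i$, for otherwise we would have $p_i(s^*_i, s^*_{-i}) > p_i(t^{(j)}, s^*_{-i}) \geq p_i(t^{(0)}, s^*_{-i}) > p_i(s^*_i, s^*_{-i})$, a contradiction. Iterating yields infinitely many pairwise distinct elements of $S_i$ (distinct because payoffs against $s^*_{-i}$ strictly increase), contradicting finiteness of $S_i$. Hence $s^*$ is a Nash equilibrium of $G$.

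For uniqueness, I would show by induction on $k$ that every Nash equilibrium $s'$ of $G$ is also a Nash equilibrium of $G_k$. The base case $k=0$ is trivial. For the inductive step, if $s'$ is a Nash equilibrium of $G_k$, then for each player $i$ the strategy $s'_i$ is a best response in $G_k$ to $s'_{-i} \in S^k_{-i}$, so $s'_i$ is not a never best response in $G_k$; hence $s'_i$ is not eliminated at step $k$, so $s' \in S^{k+1}$. Moreover, since $S^{k+1}_i \subseteq S^k_i$, the strategy $s'_i$ remains a best response to $s'_{-i}$ within $G_{k+1}$, so $s'$ is a Nash equilibrium of $G_{k+1}$. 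Applying this to $k=m$ gives $s' = s^*$.

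The main obstacle is the first part, where one must ensure the chain of better strategies against $s^*_{-i}$ can always be extended; the point $s^*_{-i} \in S^k_{-i}$ for all $k$ is what makes the inductive construction go through, and finiteness of $S_i$ then closes the argument.
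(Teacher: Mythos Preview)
Your proof is correct. The paper's own proof is the single sentence ``each step of the elimination process maintains the set of Nash equilibria,'' and your uniqueness argument is exactly the forward direction of that preservation claim. For existence, however, you take a more involved route than the paper implicitly has in mind: rather than building an infinite strictly increasing chain $t^{(0)}, t^{(1)}, \ldots$ and invoking finiteness of $S_i$, one can simply pick once and for all a best response $t^*_i \in S_i$ to $s^*_{-i}$ in $G$, observe that since $s^*_{-i} \in S^k_{-i}$ for every $k$ this $t^*_i$ is never a never-best-response at any stage, hence survives to $S^m_i = \{s^*_i\}$, so $t^*_i = s^*_i$. This is the backward direction of the paper's preservation statement and avoids the chain construction entirely. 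Your argument trades this off for a direct use of finiteness; both are valid, but the preservation formulation is shorter and makes the two halves of the proof symmetric.
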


\begin{proof}
It suffices to note that each step of the elimination process maintains the set
of Nash equilibria.  
\HB
\end{proof}

We now prove that this class of games is actually contained in the class
of weakly acyclic games. More precisely, the following holds.

\begin{theorem}
\label{thm:solv-st}
If a finite game can be solved by IENBR, then it is in
$\mathit{State}_{\bestr}$.
\end{theorem}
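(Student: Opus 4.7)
The plan is as follows. Since $G$ can be solved by IENBR, fix a witnessing chain $G = G_0 \myra G_1 \myra \cdots \myra G_m$ whose component strategy sets in $G_k$ I write as $S_1^{(k)}, \ldots, S_n^{(k)}$, so that each $S_i^{(m)}$ is a singleton. For every strategy $s_i \in S_i$ define its \emph{rank} $r(s_i) := \max\{k \mid s_i \in S_i^{(k)}\}$. I would take the scheduler to be the state-based BR-scheduler $\sch_{\st}$ that, on any joint strategy $s$ that is not a Nash equilibrium, selects the smallest-indexed player $i$ minimizing $r(s_i)$.

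First I would check that $\sch_{\st}$ is well-defined, i.e., that the selected player really did not select a best response in $s$. Set $k := \min_j r(s_j)$ and let $i$ be the chosen player. By construction $r(s_j) \geq k$ for every $j$, hence $s_{-i} \in S_{-i}^{(k)}$. If $k = m$, then $s$ is the unique joint strategy of $G_m$, which is trivially a Nash equilibrium of $G_m$ and therefore, by the preservation of Nash equilibria under the elimination (as used in the preceding proposition), also one of $G$, a contradiction. So $k < m$, meaning $s_i \in S_i^{(k)} \setminus S_i^{(k+1)}$ is a never best response in $G_k$; hence some $s'_i \in S_i^{(k)} \subseteq S_i$ satisfies $p_i(s'_i, s_{-i}) > p_i(s_i, s_{-i})$, showing that $s_i$ is not a best response to $s_{-i}$ in $G$.

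The heart of the argument is the following key lemma: if $s_{-i} \in S_{-i}^{(k)}$ and $s_i^*$ is any best response to $s_{-i}$ in $G$, then $r(s_i^*) \geq k+1$. I would prove it by contradiction. If $r(s_i^*) = j \leq k$, then $s_i^* \in S_i^{(j)} \setminus S_i^{(j+1)}$ is a never best response in $G_j$; since $s_{-i} \in S_{-i}^{(k)} \subseteq S_{-i}^{(j)}$, some $s_i' \in S_i^{(j)} \subseteq S_i$ would have strictly larger payoff than $s_i^*$ against $s_{-i}$, contradicting that $s_i^*$ is a best response in $G$.

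Termination would then follow from the simple potential $\Phi(s) := \sum_{j} r(s_j)$, which is bounded above by $nm$. Consider a step $s \bredge{i} s'$ in a BR-improvement path respecting $\sch_{\st}$: here $i = \sch_{\st}(s)$ satisfies $r(s_i) = \min_j r(s_j) =: k$, and $s'_i$ is a best response in $G$ to $s_{-i} \in S_{-i}^{(k)}$, so the key lemma gives $r(s'_i) \geq k+1 > r(s_i)$. Since $s'_j = s_j$ for $j \neq i$, we obtain $\Phi(s') > \Phi(s)$, so no $\sch_{\st}$-respecting BR-improvement path can be infinite. I do not foresee a real obstacle; the only subtle point is the strict rank increase under a best-response move, and that is precisely what the key lemma delivers.
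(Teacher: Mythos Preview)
Your argument is correct, and it takes a genuinely different route from the paper's.

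The paper does \emph{not} construct a state-based scheduler directly. Instead it uses the history-dependent scheduler that always picks the player who has gone longest without playing a best response, argues via a decomposition of any BR-improvement path into ``blocks'' that each block permanently removes one strategy from the IENBR elimination sequence, concludes that the game respects this (non-state-based) BR-scheduler, and finally invokes Theorem~\ref{thm:sch-positionalBR} ($\mathit{Sched}_{\bestr}\Rightarrow\mathit{State}_{\bestr}$) to land in $\mathit{State}_{\bestr}$.

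Your approach is more direct and arguably cleaner: the rank function derived from the elimination chain lets you write down a state-based scheduler explicitly, and the sum-of-ranks potential $\Phi(s)=\sum_j r(s_j)$ gives immediate termination with an explicit bound of at most $nm$ steps on any respecting BR-improvement path. In the paper's language you have exhibited an $f$-BR-potential for a state-based $f$, so Theorem~\ref{thm:fpotential}$(ii)$ applies without the detour through Theorem~\ref{thm:sch-positionalBR}. The paper's route, on the other hand, isolates the general principle that any ``fair'' scheduler (one guaranteeing every player is eventually served) works, which is a somewhat different insight even if it is less efficient for the present purpose. One small presentational point: your key lemma as stated needs $k<m$ (otherwise the conclusion $r(s_i^*)\geq k+1$ would be vacuously false), but you only ever apply it with $k=\min_j r(s_j)<m$, so this is harmless.
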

\begin{proof}
We use the scheduler $f$ which schedules each time the player who did
not play a best response the longest, breaking ties in favour of the
player with the smallest index. Fix a finite game $\strgame$ that can
be solved by IENBR. By definition of the scheduler $f$, it satisfies
the following property:
\begin{description}
\item [(P1)] for all $i \in N$, for every BR-improvement path
  $\rho=\strprofile^0, \strprofile^1, \LL$ in $\strgame$ that respects
  $f$ and for all $j \geq 0$, there exists $k \geq j$ such that $i \in
  \brset(\strprofile^k)$.
\end{description}

Let $\rho=\strprofile^0,\strprofile^1,\ldots$ a BR-improvement path
which respects $\sch$. We argue that $\rho$ is finite.

We find it useful to introduce the following notion: a finite sequence
of strategy profiles $b=s^0,\ldots,s^k$ is called a \bfe{block} if for
all $i \in \nbrset(s^0)$, there exists $j \in \{0,\ldots,k\}$ such
that either $i \in \brset(s^j)$ or $i \in \sch(s^0,\ldots,s^j)$. The
sequence $b$ is said to be minimal if there is no prefix $b'$ of $b$
such that $b'$ is a block. Since $\sch$ satisfies (P1), we can
represent the best response improvement path $\rho$ as a concatenation
of minimal blocks $b^0,b^1,\ldots$.

Since $\strgame$ can be solved by IENBR, there is a finite reduction
sequence $\xi=\strgame^0 \to \strgame^1 \to \ldots \to \strgame^m$ for
$\strgame$ such that each player is left with a single strategy in
$\strgame^m$. By Theorem 4.2 in \cite{Apt05} we can assume without
loss of generality that each reduction step eliminates a single
strategy. Let $(i_1,\strprofile_{i_1}),
(i_2,\strprofile_{i_2}),\ldots,(i_m,\strprofile_{i_m})$ be the
corresponding sequence of pairs of player identity and of the strategy
which are eliminated in $\xi$. Consider the first pair
$(i_1,\strprofile_{i_1})$ and the block
$b^0=\strprofile^0,\ldots,\strprofile^l$. We have two cases:
\begin{itemize}
\item $\strprofile^0_{i_1}=\strprofile_{i_1}$. 

\noindent Then $i_1 \in \nbrset(\strprofile^0)$. Indeed, since
$\strprofile_{i_1}$ is a never best response, $i_1 \in
\nbrset(\strprofile)$ for any $\strprofile \in S$. Since $\sch$
satisfies (P1) and $\strprofile_{i_1}$ is a never best response there
exists $j \in \{0,\ldots, l\}$ such that $i_1 \in
\sch(s^0,\ldots,s^j)$. Therefore $\strprofile^{j+1}_{i_1} \neq
\strprofile_{i_1}$. Again, since $\strprofile_{i_1}$ is a never best
response player $i$ never switches back to $\strprofile_{i_1}$ in the
suffix of $\rho$ starting in $\strprofile^j$ irrespective of the
strategies chosen by the other players.

\item $\strprofile^0_i \neq \strprofile_{i_1}$.

\noindent Then since $\strprofile_{i_1}$ is a never best response,
player $i_1$ never switches to $\strprofile_{i_1}$ in the improvement
path $\rho$, irrespective of the strategies chosen by the other
players.
\end{itemize}
Therefore in the suffix $\rho^1=b^1,b^2,b^3,\ldots$ of the improvement
path $\rho$ the strategy $\strprofile_{i_1}$ is never chosen. This
means that $\rho^1$ is a best response improvement path in the
restricted game $\strgame^1$. Now consider the pair
$(i_2,\strprofile_{i_2})$ and the block $b^1$. By the same argument as
above, we can show that in the suffix $\rho^2=b^2,b^3,\ldots$ of the
improvement path $\rho$, the strategy $\strprofile_{i_2}$ is never
chosen and therefore $\rho^2$ is a best response improvement path in
the restricted game $\strgame^2$. Since $\strgame$ can be solved by
the elimination sequence $\xi=\strgame^0,\strgame^1,\ldots,
\strgame^m$, by iterating the above argument we get that there is a
suffix of the improvement path $\rho$ which involves only joint
strategies in $\strgame^m$. Since in $\strgame^m$ each player is left
with a single strategy, $\rho$ is finite.

We conclude that the game $\strgame$ respects the BR-scheduler $f$.
Consequently, by Theorem \ref{thm:sch-positionalBR}, $\strgame$ is in
$\mathit{State}_{\bestr}$.  
\HB
\end{proof}

In particular, in view of the remarks at the beginning of this
section, if a game can be solved by iterated elimination of strictly
dominated strategies (by a mixed strategy), in short, IESDS, then it
is weakly acyclic.  An example of such a game is the \bfe{beauty
  contest game} due to \cite{Mou86}. In this game there are $n > 2$
players, each with the set of strategies equal $\{1, \ldots, 100\}$.
Each player submits a number and the payoff to each player is obtained
by splitting 1 equally between the players whose submitted number is
closest to $\frac{2}{3}$ of the average.  For example, if the
submissions are $29,32,29$, then the payoffs are respectively
$\frac{1}{2}, 0, \frac{1}{2}$.  By Theorem~\ref{thm:solv-st}
this game is weakly acyclic.

Note that the implication in the above theorem cannot be reversed. Indeed,
the game given in Example~\ref{exa:3} is in $\textit{State}_{\textit{BR}}$
and it is easy to check that it cannot be solved by IENBR, as every strategy
in this game is a best response. An even simpler example is the well-known
Battle of Sexes game.

We now identify the smallest class used in our classification of
finite weakly acyclic games that includes all finite games that can be
solved by IENBR.  We begin with two player games for which we have the
classification presented in Figure~\ref{fig:impl2} of
Section~\ref{sec:two}.  In view of this, it suffices to note the
following example.

\begin{myexample}[$\textit{IENBR}$ $\not \Rightarrow$ $\textit{Set}$]
\rm 
Consider the game 

\begin{center}
  \begin{game}{3}{3}
      & $A$    & $B$  & $C$\\
$A$   &$3,2$   &$1,1$ &$2,0$\\
$B$   &$1,2$   &$3,1$ &$1,2$\\
$C$   &$2,2$   &$2,3$ &$1,1$
\end{game} 
\end{center}
and its better response graph displayed in Figure~\ref{fig:set}.
\begin{figure}[ht]
\centering
$
\def\objectstyle{\scriptstyle}
\def\labelstyle{\scriptstyle}
\xymatrix@R=20pt @C=30pt{
(3,2) &(1,1) \ar[l] \ar[d] \ar@/_1.5pc/[dd] &(2,0) \ar[l] \ar@/_1.5pc/[ll]\\
(1,2) \ar[u] \ar[d]&(3,1) \ar[l] \ar[r] &(1,2) \ar[u] \\
(2,2)  \ar@/^1.5pc/[uu] \ar[r] &(2,3) \ar[u] &(1,1) \ar@/_1.5pc/[uu] \ar[l] \ar@/^1.5pc/[ll]\\
}$ \\[3mm]
\caption{\label{fig:set} A better response graph}
\end{figure}

It is easy to see that this game can be solved by IENBR and that the outcome is 
$(A,A)$. However, this game does not respect any set-based scheduler.
Indeed, suppose otherwise. Then such a scheduler is defined using a choice function
$g$. If $g(\{1, 2\}) = 1$, then the infinite improvement path 
$((A,B), \ (B,B), \ (B,C), \ (C,C))^{*}$ 
respects this scheduler.
In turn, if $g(\{1, 2\}) = 2$, then the 
infinite improvement path 
$((C,A), \ (C,B), \ (B,B), \ (B,A))^{*}$ 
respects this scheduler.
\HB
\end{myexample}

The case of arbitrary games is more complicated.  First we show how
given two $n$ player games $G_1 := (S_1, \ldots, S_n, p_1, \ldots,
p_n)$ and $G_2 := (S'_1, \ldots, S'_n, r_1, \ldots, r_n)$, where $S_1
\cap S'_1 = \ES$, that have strictly positive payoffs and both can be
solved by IENBR, we can construct using an auxiliary game $G_3$ a game
$G := G_1 \uplus G_2 \uplus G_3$ that also can be solved by IENBR.

Suppose that the outcomes of the IENBR applied to the games $G_1$ and $G_2$ are respectively the joint strategies 
$(s_1, \LL, s_n)$ and $(s'_1, \LL, s'_n)$.
Further, assume that $a \not\in S_1 \cup S'_1$.
Let
$G_3 := (\{a\}, S_2 \cup S'_2, \ldots, S_n \cup S'_n, u_1, \ldots, u_n)$,
where the $u_i$ payoff functions are defined in such a way that
\begin{itemize}
\item the NBR eliminations steps applied to $G_1$ remain valid in the context of $G$,

\item the same holds for the game $G_2$,

\item the resulting game $(\{a, s_1, s'_1\}, \{s_2, s'_2\}, \ldots, \{s_n,
  s'_n\}, u_1, \ldots, u_n)$ can be solved by IENBR.

\end{itemize}

The first two items can be achieved by ensuring that for each $i \in
\{2, \LL, n\}$ and each eliminated strategy $s''_i \in (S_i \cup S'_i)
\setminus \{s_i, s'_i\}$ is not a best response to a joint strategy
$s_{-i}$ in $G$ with $s_1 = a$.
We thus stipulate that 
\[
\mbox{$u_i(a,s_{-1}) := 0$, when $s_i \in (S_i \cup S'_i) \setminus \{s_i, s'_i\}$
for some $i \in \{2, \LL, n\}$.}
\]

To solve the game $(\{a, s_1, s'_1\}, \{s_2, s'_2\}, \ldots, \{s_n, s'_n\}, u_1, \ldots, u_n)$ by IENBR
we first stipulate that the strategies $s_1$ and  $s'_1$ are strictly dominated in it by $a$. We thus stipulate that
\[
\mbox{$u_1(a,s_{-1}) := \max + 1$, when $s_{-1} \in \times_{j \neq 1} \{s_j, s'_j\}$,}
\]
where $\max$ is the maximum payoff used in $G_1$ or $G_2$.

To ensure that the game $(\{a\}, \{s_2, s'_2\}, \ldots, \{s_n, s'_n\}, u_1, \ldots, u_n)$ can be solved by IENBR
we further stipulate that for $i \in \{2, \LL, n\}$

\[
\mbox{$u_i(s_i, s_{-i}) > u_i(s'_i, s_{-i})$, where $s_{-i} \in \{a\} \times_{j \neq i, j \neq 1} \{s_j, s'_j\}$.}
\]

Then for any payoff functions $u_1, \LL, u_n$ that satisfy the above conditions the outcome of the IENBR applied to
above game $G$ yields the joint strategy $(a,s_2, \LL, s_n)$.

We now proceed with the following example.
\begin{myexample}[\textit{IENBR} $\not \Rightarrow \textit{Local}_{\bestr}$]

\rm 
First, consider Figure~\ref{fig:IENBR}. 
\begin{figure}[ht]
\centering
$
\def\objectstyle{\scriptstyle}
\def\labelstyle{\scriptstyle}
\xymatrix@!0@C=30pt@R=30pt{
& (0,1,1) \ar@{->}'[d][dd] \ar@{->}[dl]
&& (1,1,1) \ar@{->}[ll] \ar@{->}[dd]\\
(0,0,1) \ar@{->}[rr] \ar@{->}[dd] & &(1,0,1)
\ar@{->}[ur]\ar@{->}[dd]\\
& (0,1,0) \ar@{--}'[r][rr] \ar@{->}[dl]& & (1,1,0) \ar@{->}[dl]\\
(0,0,0) & &(1,0,0) \ar@{->}[ll]
}
$
\caption{\label{fig:IENBR} A best response graph}
\end{figure}
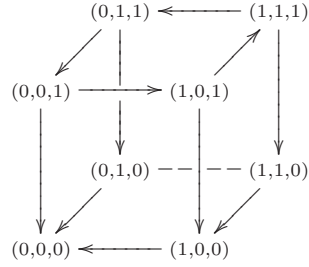

It depicts the best response graph of a three player game $G_1$.
We define the payoff functions analogously to Example~\ref{exa:LnF}.
Note that this game can be solved by IENBR with the outcome being the joint strategy $(0,0,0)$.

Further, this game respects a set-based BR-scheduler $f$ iff
\[
\mbox{$f(\{1,3\}) = 3$ or $f(\{2,3\}) = 3$.}
\]
So it does not respect the local BR-schedulers that correspond to the permutations $1,2,3$ and $2,1,3$.

By renaming the players we obtain two other games, $G_2$ and $G_3$, that both can be solved by IENBR and such
that $G_2$ does not respect the local BR-schedulers that correspond to the permutations $1,3,2$ and $3,1,2$
and $G_3$ does not respect the local BR-schedulers that correspond to the permutations $3,2,1$ and $2,3,1$.
Using the construction explained prior to this example twice we construct then a game $G$ that can be solved by IENBR 
and that does not respect any local BR-scheduler.
\HB
\end{myexample}

So this example shows that the class of finite games that can be solved by IENBR is not contained in $\textit{Local}_{\bestr}$.
We suspect that in fact this class is not contained in $\textit{Set}_{\bestr}$, however were not able to prove it.

\section{Potentials}
To characterize the finite games that have the FIP \cite{MS96}
introduced the concept of a (generalized ordinal) \bfe{potential}. We
now introduce an appropriately modified notion to characterize the
games that respect a scheduler. We shall use it in the next section to
reason about a natural class of games.

Consider a game 
$(S_1, \ldots, S_n, p_1, \ldots, p_n)$ and a 
scheduler $f$. We say that a function $F: S \myra
\mathbb{R}$ is an \bfe{$f$-potential} 
(respectively, an \bfe{$f$-BR-potential})
if for every initial prefix of an improvement path
(respectively, an BR-improvement path)
$(\strprofile^1,\LL, \strprofile^k,\strprofile^{k+1})$ that respects
$f$ we have
$F(\strprofile^{k+1}) > F(\strprofile^k)$.

Note that when $f$ is a state-based scheduler, then a function $F$ is an $f$-potential
iff for all $i, s'_i$ and $s$
\[
\mbox{if $f(s) = i$ and $p_i(s'_i, s_{-i}) > p_i(s_i, s_{-i})$, then $F(s'_i, s_{-i}) > F(s_i, s_{-i})$,}
\]
and similarly for the $f$-BR-potentials.
In the proof below we use the following classic result of \cite{Kon27}.
\begin{lemma}[K\"{o}nig's Lemma] 
Any finitely branching tree is either finite or it has an infinite path.  
\end{lemma}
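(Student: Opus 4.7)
The plan is to prove the contrapositive form: assume the tree $T$ is infinite and finitely branching, and construct an infinite path. Call a node $v$ of $T$ \emph{heavy} if the subtree rooted at $v$ contains infinitely many nodes; otherwise call it \emph{light}.

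First I would establish the key inductive lemma: in a finitely branching tree, every heavy node has at least one heavy child. The argument is a finite pigeonhole: the subtree at $v$ is the disjoint union of $\{v\}$ together with the subtrees rooted at the children $u_1, \ldots, u_k$ of $v$ (where $k$ is finite by the branching hypothesis). If all of $u_1, \ldots, u_k$ were light, then the subtree at $v$ would be a finite union of finite sets plus $\{v\}$, hence finite, contradicting the heaviness of $v$. So at least one $u_j$ is heavy.

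Second, I would apply this lemma recursively to build the path. The root $v_0$ is heavy because $T$ itself is infinite. Given that $v_i$ has been chosen and is heavy, pick $v_{i+1}$ to be any heavy child of $v_i$; such a child exists by the lemma. The resulting sequence $v_0, v_1, v_2, \ldots$ is by construction an infinite path in $T$.

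The main obstacle is not the combinatorics but the use of dependent choice to make the sequence of selections; in the classical formulation of K\"{o}nig's Lemma one tacitly assumes enough choice to pick a heavy child at each level, which is harmless here since the paper is using the result only for countable, finitely branching trees of improvement sequences. A minor subtlety to flag is that the lemma genuinely needs the finite-branching hypothesis: a tree with an infinitely branching root of depth one can be infinite without having any infinite path, showing the hypothesis cannot be dropped.
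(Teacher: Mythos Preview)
Your argument is correct and is the standard proof of K\"{o}nig's Lemma. Note, however, that the paper does not actually prove this statement: it simply quotes it as a classical result with a citation to \cite{Kon27} and then applies it in the proof of Theorem~\ref{thm:fpotential}. So there is no ``paper's own proof'' to compare against; your proposal supplies a proof where the paper gives none.
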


\begin{theorem}
\label{thm:fpotential}
Consider a finite game $G$.
\begin{enumerate}[(i)]
\item $G$ respects a scheduler $f$ iff an $f$-potential exists.  

\item $G$ respects a BR-scheduler $f$ iff an $f$-BR-potential exists.  
\end{enumerate}
\end{theorem}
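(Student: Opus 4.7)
For both parts, the easy direction ($\Leftarrow$) is routine: given an $f$-potential (respectively $f$-BR-potential) $F$, along any $f$-respecting improvement (respectively BR-improvement) path $(s^1, s^2, \ldots)$ the values $F(s^1), F(s^2), \ldots$ form a strictly increasing sequence in $\mathbb{R}$; since $S$ is finite so is $F(S)$, forcing the path to terminate. Hence $G$ respects $f$.

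For the harder direction $(\Rightarrow)$ of (i), suppose $G$ respects $f$. I would form the tree $T$ of all finite $f$-respecting improvement sequences, rooted at the empty sequence, whose children at a node $\xi$ are its one-step $f$-respecting extensions. $T$ is finitely branching, because at each $\xi$ the scheduler $f$ designates a unique player and that player has finitely many better responses. Since $G$ respects $f$, no branch of $T$ is infinite, so K\"onig's Lemma forces $T$ to be finite. Next, by Theorem~\ref{thm:sch-positional} the game also respects a state-based scheduler, call it $f'$; for a state-based scheduler the relation $s \to_{f'} s'$ (meaning $f'(s) = i$ and $s'$ arises from $s$ by a strictly improving $i$-deviation) cannot cycle on the finite set $S$, since a cycle would immediately generate an infinite $f'$-respecting improvement path. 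A topological sort of the acyclic relation $\to_{f'}$ yields a rank function $F : S \to \mathbb{N}$ with $F(s) < F(s')$ whenever $s \to_{f'} s'$, and by the state-based reformulation of the potential noted right before the theorem, this $F$ is an $f'$-potential. Part (ii) follows the identical template with BR-improvement sequences and Theorem~\ref{thm:sch-positionalBR} in place of Theorem~\ref{thm:sch-positional}.

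The main obstacle I anticipate is reconciling the statement, which reads as though the same fixed $f$ appears on both sides of the iff, with what the state-based reduction naturally delivers — a potential for the derived scheduler $f'$, not for $f$ itself. The cleanest resolution is to read the iff existentially over schedulers, so that both sides are witnessed by \emph{some} scheduler; under this reading the construction above is already complete. If one insists on working with the originally given $f$, then in place of passing to $f'$ one must define $F$ directly from the finite tree $T$ — for example $F(s) := \max\{d(\xi) : \xi \in T,\ \mathrm{last}(\xi) = s\}$, where $d(\xi)$ is the depth of $\xi$ in $T$ — and argue by induction on $T$ that $F$ strictly increases along every $f$-respecting prefix. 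Here the finiteness of $T$ coming from K\"onig's Lemma is what makes the definition sensible in the first place, and it is the subtlety of the history-dependence of $f$ (the same state $s$ may occur as the last coordinate of many nodes of $T$, with different continuations) that makes the verification the delicate step.
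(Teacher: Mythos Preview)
Your $(\Leftarrow)$ direction is correct and matches the paper's argument.

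For $(\Rightarrow)$ the paper does \emph{not} pass through Theorem~\ref{thm:sch-positional}. It works directly with the tree $T$ you describe: once K\"onig's Lemma shows $T$ is finite, the paper sets $F(s)$ equal to the \emph{number} of $f$-respecting improvement sequences that terminate in $s$, and asserts that this $F$ is an $f$-potential. Your detour via a derived state-based scheduler $f'$ produces the wrong object --- an $f'$-potential rather than an $f$-potential --- and your proposed ``existential'' reading is not what the statement says: the biconditional is about one fixed $f$. Your fallback (maximal depth in $T$) is closer in spirit to what the paper does, though the paper uses a count rather than a maximum.

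That said, your worry about history-dependence is well placed, and it applies just as much to the paper's ``Clearly'' as to your max-depth function. When $f$ is state-based, every $f$-respecting sequence ending in $s^k$ extends to one ending in $s^{k+1}$ (the scheduler's choice at $s^k$ does not depend on the history), and the length-one sequence $(s^{k+1})$ supplies the strict inequality; the same reasoning validates your max-depth proposal in that case. For a genuinely history-dependent $f$ this extension step is unavailable, and one can in fact build finite games that respect such an $f$ yet admit no $f$-potential: take a matching-pennies cycle $a\to b\to c\to d\to a$ together with dominant ``exit'' strategies for each player, and let $f$ make one cycle step from each singleton history but force the exit on every history of length $\geq 2$. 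Then every $f$-respecting improvement path is short, yet each of $(a,b),(b,c),(c,d),(d,a)$ occurs as a consecutive pair in some such path, so no real-valued $F$ can strictly increase along all four. Thus both constructions --- the paper's count and your maximum --- are sound only for state-based schedulers, which (in view of Theorems~\ref{thm:sch-positional} and~\ref{thm:sch-positionalBR}) is also the only case the paper needs later, in the proof of Theorem~\ref{thm:coord-local}.
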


\begin{proof}
$(i)$ $(\Ra)$ Consider a branching tree the root of which has all
  joint strategies as successors, the non-root elements of which are
  joint strategies, and whose branches are the improvement paths that
  respect $f$.  Because the game is finite this tree is finitely
  branching.  By K\"{o}nig's Lemma this tree is finite, so we conclude
  that the number of improvement paths that respect $f$ is finite.
  Given a joint strategy $s$ define $F(s)$ to be the number of
  improvement sequences (in the sense of the proof of Theorem
  \ref{thm:sch-BR}) that respect $f$ and that terminate in $s$.
  Clearly $F$ is an $f$-potential.

\smallskip

\NI
$(\La)$ Let $F$ be an $f$-potential.  Suppose by contradiction that an
infinite improvement path that respects $f$ exists.  Then the
corresponding values of $F$ form a strictly increasing infinite sequence.
This is a contradiction, since there are only finitely many joint
strategies.

The proof of $(ii)$ is analogous.
\HB
\end{proof}

The argument given in $(i)$$(\Ra)$ follows the proof of \cite{Mil96}
of the fact that every game that has the FIP has a generalized ordinal
potential. Note that when the range of the $f$-potential is finite the
implications $(\Leftarrow)$ in $(i)$ and $(ii)$ also hold for infinite
games.

\section{Cyclic coordination games}
\label{sec:coordination}
In coordination games the players need to coordinate their strategies
in order to choose among multiple pure Nash equilibria. Here we
consider a natural set up according to which the players are arranged
in a directed simple cycle and the payoff functions can yield three
values: 0 if one chooses a `noncommitting' strategy, 1 if one
coordinates with the neighbour and $-1$ otherwise. We call such games
\bfe{cyclic coordination games}. They are special cases of strategic
games introduced in~\cite{SA12} that are naturally associated with
social networks built over arbitrary weighted directed graphs.

More precisely, let $\strgame_{\coord}=(S_1, \ldots, S_n, p_1, \ldots,
p_n)$ be a (possibly infinite) strategic game in which there is a
special strategy $t_0 \in \bigcap_{i \in N} S_i$ common to all the
players. For $i \in N$, let $i \oplus 1$ and $i \ominus 1$ denote the
increment and decrement operations done in cyclic order within
\C{\LLn}. That is, for $i \in \C{1,\LL,n-1}$, $i \oplus 1=i+1$, $n
\oplus 1=1$, for $i \in \C{2,\LL,n}$, $i \ominus 1=i-1$, and $1
\ominus 1=n$. The payoff functions are defined as,

\[
p_i(s) := \begin{cases}
        \phantom{-}0    & \text{if }\  s_{i} = t_0, \\
        \phantom{-}1    & \text{if }\  s_{i} = s_{i \ominus 1} \text{ and } s_i \neq t_0,\\
                  -1    & \text{otherwise}.
        \end{cases}
\]

Thus for each player $i$, and joint strategy $s$, the best response of
$i$ is to choose $s_{i \ominus 1}$ if $s_{i \ominus 1} \in S_i$ and
$t_0$ otherwise. Therefore, $s$ is a Nash equilibrium in the game
$G_{\textit{coord}}$ iff it is is of the form $(t, \dots,
t)$. 

We first show that cyclic coordination games are weakly acyclic. More precisely,
we prove the following stronger result.

\begin{theorem}~
\begin{enumerate}[(i)]
\item The game $\strgame_{\coord}$ has the FIP iff $n = 2$ or
  $\cap_{i \in N} S_i = \{t_0\}$.
\item In $\strgame_{\coord}$, starting from each joint strategy there
  exists an improvement path of length at most $n$ and a BR-improvement
  path of length at most $2n-2$.
\end{enumerate}
\end{theorem}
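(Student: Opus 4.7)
The plan is to prove each part by an explicit combinatorial construction.

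For part (i), I would handle the two directions separately. For $(\Rightarrow)$, assume $n \geq 3$ and pick $a \in \cap_{i \in N} S_i$ with $a \neq t_0$. I claim the state $(t_0, \ldots, t_0, a)$ lies on a cycle of length $2n$ in the better-response graph. Indexing the alternating moves as pairs $(2k-1, 2k)$ for $k = 1, \ldots, n$, at step $2k-1$ player $k$ switches $t_0 \to a$ (an improvement $0 \to 1$, because by induction its cyclic predecessor plays $a$), and at step $2k$ player $k-1$ (cyclically) switches $a \to t_0$ (an improvement $-1 \to 0$, because its predecessor was just reset to $t_0$). After $2n$ such moves the lone $a$ has travelled once around the cycle and we are back at the starting configuration, giving an infinite improvement path that refutes FIP.

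For $(\Leftarrow)$ of (i), the case $n = 2$ is handled by the potential $F(s) := p_1(s) + p_2(s)$: when player $i$ strictly improves, $p_{3-i}$ cannot decrease, since a decrease would require $s_i = s_{3-i} \neq t_0$ initially, forcing $p_i = 1$ and contradicting improvability. Hence $F$ is a generalized ordinal potential and FIP follows by \cite{MS96}. The case $\cap_{i \in N} S_i = \{t_0\}$ is the technical heart; here the only Nash equilibrium is $(t_0, \ldots, t_0)$, and I would argue for FIP by contradiction: any infinite improvement path must enter a cycle $C$ of states in which every player deviates infinitely often. Since a player $i$ leaving $t_0$ during an improvement has to adopt its predecessor's current strategy (which must lie in $S_i$), propagating this observation around $C$ forces the existence of a common non-$t_0$ strategy across all players, contradicting $\cap_{i \in N} S_i = \{t_0\}$. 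I expect this propagation argument to be the main technical obstacle; a cleaner alternative would be to exhibit a generalized ordinal potential directly, e.g.\ a lexicographic ranking that weights states with more coordinates equal to $t_0$ higher.

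For part (ii), I construct the paths explicitly, aiming at the Nash equilibrium $(t_0, \ldots, t_0)$. If some coordinate already satisfies $s_i^0 = t_0$, I sweep the other players in cyclic order $i+1, \ldots, i+n-1$; at each player $j$ the predecessor plays $t_0$, so either $s_j = t_0$ (skip) or $p_j = -1$ and switching $s_j \to t_0$ is a strict improvement. If no coordinate of $s^0$ equals $t_0$, either $s^0$ is already a Nash equilibrium of the form $(t, \ldots, t)$ with $t \in \cap_{i \in N} S_i$ (nothing to do) or some $i$ has $s_i^0 \neq s_{i-1}^0$, giving $p_i(s^0) = -1$ and allowing one initial improvement $s_i \to t_0$ that reduces to the previous case, for a total of at most $n$ improvements. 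For the BR bound $2n - 2$, if some $t \in \cap_{i \in N} S_i$ already appears in $s^0$ at some position $i$, cyclic propagation of $t$ from $i + 1$ uses at most $n - 1$ best-response moves since each target player's predecessor plays $t \in S_j$. Otherwise I would first sweep around the cycle, having each visited player best-respond by either copying the predecessor (when its strategy lies in the current player's strategy set) or jumping to $t_0$ (when it does not); as soon as a $t_0$ appears, the cleanup sweep takes at most $n - 1$ further best responses, and a careful accounting gives the total bound $2(n-1) = 2n - 2$. The main obstacle here is the budget analysis for the two-phase BR construction.
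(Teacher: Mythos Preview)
Your plan is largely sound: the explicit $2n$-cycle for (i)($\Rightarrow$), the potential $p_1+p_2$ for $n=2$, and the length-$n$ improvement path are all correct and at least as explicit as the paper's treatment. Two places merit comment.

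For (i)($\Leftarrow$) with $\cap_i S_i=\{t_0\}$, your propagation sketch as written does not close: observing that a player leaving $t_0$ adopts its predecessor's current strategy allows that adopted strategy to vary from visit to visit, so no single common non-$t_0$ value falls out. The fix, however, is much simpler than the lexicographic potential you contemplate. The paper argues directly on the infinite path: by pigeonhole some player $i$ deviates infinitely often, hence (pigeonhole again, on the finitely many strategies) switches to one \emph{fixed} $t\neq t_0$ infinitely often; at each such switch the predecessor must currently hold $t$, and between two such switches $i$ can only abandon $t$ after the predecessor has already done so, so the predecessor too switches to $t$ infinitely often. One backward pass around the cycle yields $t\in\cap_i S_i$, a contradiction. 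The missing idea is to pin down a single $t$ \emph{before} propagating.

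For the $2n-2$ BR bound your two-phase sweep is correct but genuinely different from the paper's construction. The paper instead follows the local BR-scheduler for the identity permutation, shows that the scheduled index strictly increases until player $n$ is reached, proves that at that moment $s_{n-1}=t_0$ (otherwise an invariant $Z(i)$ forces all coordinates equal and the state would already be Nash), so player $n$ best-responds with $t_0$, after which a second pass through players $1,\ldots,n-2$ terminates at $(t_0,\ldots,t_0)$. Your argument is more elementary; the paper's has the side benefit of exhibiting a specific local scheduler respected by $G_{\coord}$, which is the launching point for the next theorem.
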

We just noted that $s$ is a Nash equilibrium in the game $G_{\textit{coord}}$
iff it is is of the form $(t, \dots, t)$. So we can alternatively
state item $(i)$ as: The game $G_{\textit{coord}}$ has the FIP iff $n
= 2$ or it has exactly one Nash equilibrium.

\begin{proof}
$(i)$ $(\Rightarrow)$
As already mentioned when $n =2$, $\strgame_{\coord}$ has the FIP. If $n > 2$, then
the above example implies that $\cap_{i \in N} S_i = \{t_0\}$.
\II

\NI $(\Leftarrow)$ Suppose that $\strgame_{\coord}$ does not have the FIP.
Consider an infinite improvement path $\xi$.  Some player, say $i$, is
selected in $\xi$ infinitely often.  This means that player $i$
selects in $\xi$ some strategy $t \neq t_0$ infinitely often.  Indeed,
otherwise from some moment on in each joint strategy in $\xi$ his
strategy would be $t_0$, which is not the case.

Each time player $i$ switches in $\xi$ to the strategy $t$, the
strategy of his predecessor $i \ominus 1$ is necessarily $t$, as
well. So also player $i \ominus 1$ switches in $\xi$ to $t$
infinitely often.  Iterating this reasoning we conclude that each player
selects in $\xi$ the strategy $t$ infinitely often.
In particular $t \in \cap_{i \in N} S_i$.

\smallskip

\NI$(ii)$ Take a joint strategy $s$. Note that if all payoffs in $s$ are $\geq
0$, then $s$ is a Nash equilibrium. Suppose that some payoff in $s$ is
$< 0$. Then repeatedly select the first player in the cyclic order
whose payoff is negative and let him switch to $t_0$. After at most
$n$ steps the Nash equilibrium $(t_0, \LL, t_0)$ is reached.

For the BR-improvement path we use the local scheduler $\sch$
associated with the identity permutation, i.e., we repeatedly schedule
the first player in the cyclic order who did not select a best
response.

Consider a joint strategy $s$ taken from a BR-improvement path.
Observe that for all $k$ if $s_k \neq t_0$ and $p_k(s) \geq 0$ (so in
particular if $s_k$ is a best response to $s_{-k}$), then $s_k = s_{k
  \ominus 1}$. So for all $i >1$, the following property holds:
\[
\mbox{$Z(i)$: if $\sch(\strprofile) = i$ and $\strprofile_{i-1} \neq
  t_0$ then for all $j \in \{n,1,\LL, i-1\}$, $\strprofile_j =
  \strprofile_{i-1}$.}
\]
In words: if $i$ is the first player who did not select a best
response and player $i-1$ strategy is not $t_0$, then this is a
strategy of every earlier player and of player $n$.
  
Along each BR-improvement path that respects $\sch$ the value of
$\sch(\strprofile)$ strictly increases until the path terminates or at
certain stage $\sch(\strprofile) = n$. Note that then
$\strprofile_{n-1}=t_0$ since otherwise on the account of property
$Z(n)$ all players' strategies are equal, so $\strprofile$ is a Nash
equilibrium and hence $\sch(\strprofile)$ is undefined. So the unique
best response for player $n$ is $t_0$. This switch begins a
new round with player 1 as the next scheduled player.  Player 1 also
switches to $t_0$ and from now on every consecutive player switches to
$t_0$, as well. The resulting path terminates once player $n-2$
switches to $t_0$.

Consequently the length of the generated BR-improvement path is at most $2n-2$.
\HB
\end{proof}

The proof of $(ii)$ shows that each cyclic coordination game respects
a specific state-based scheduler and the local BR-scheduler associated
with the identity permutation.  However, as the following theorem
shows, a much stronger result holds.

\begin{theorem}
\label{thm:coord-local}
Each coordination game $\strgame_{\coord}$ respects every local scheduler.
\end{theorem}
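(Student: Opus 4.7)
My plan is to apply Proposition \ref{prop:local-permutation} to write the local scheduler $f$ as $[\pi]$ for some permutation $\pi$ of $\setn{n}$, then reduce to a finite subgame, and finally derive a contradiction from any infinite $f$-respecting improvement path. Fix an initial joint strategy $s^0$ and any $f$-respecting improvement path $\xi$ starting at $s^0$. Every better response chosen by any player in any state of $\xi$ is either $t_0$ or the current strategy of that player's predecessor, so by induction every strategy used by any player along $\xi$ belongs to the finite set $T := \{s^0_1, \LL, s^0_n, t_0\}$. Hence it suffices to prove that the finite restricted game $G'$ with strategy sets $S_i \cap T$ respects $f$, and by Theorem \ref{thm:fpotential}(i) this is equivalent to proving that no infinite $f$-respecting improvement path exists in $G'$.

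Assume for contradiction that such a path exists in $G'$. Since $G'$ is finite, some state repeats and we may extract a cycle $s^0 \To \cdots \To s^m = s^0$. Let $I \subseteq N$ be the set of players scheduled during the cycle. Two observations drive the argument. First, if $\pi(j) \notin I$ then $\pi(j) \notin \mathit{NBR}(s^k)$ for every $k$: otherwise $f = [\pi]$ would schedule a $\pi$-highest-priority player with this property, forcing some $\pi(j) \notin I$ to be scheduled, a contradiction. Second, if $i \in I$ then also $i \ominus 1 \in I$: the payoff $p_i$ must return to its initial value over the cycle, while each scheduling of $i$ strictly increases $p_i$, and only schedulings of $i$ and $i \ominus 1$ can affect $p_i$, so $i \ominus 1$ must be scheduled to balance the net increase. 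Iterating the second observation around the cyclic predecessor relation yields $I = N$, so in particular $\pi(1) \in I$.

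The final step is a counting argument centred on $\pi(1)$. Since $\pi(1)$ has the highest priority, it is scheduled whenever it is in $\mathit{NBR}$, and it can re-enter $\mathit{NBR}$ only when $s_{\pi(1) \ominus 1}$ changes; a careful case analysis (using that $s_{\pi(1)}$ is constant between bursts of $\pi(1)$ schedulings) shows that only some schedulings of $\pi(1) \ominus 1$ actually change $p_{\pi(1)}$, always by $0$ or $-2$. Balancing these against the positive increments produced by schedulings of $\pi(1)$, and iterating the analogous balance equation around the game cycle while exploiting that $\pi$ is a bijection between priority positions and players, yields the desired contradiction.

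The main obstacle is the cascade effect on potentials: when the scheduled player $\pi(k)$ switches, its cyclic successor $\pi(k) \oplus 1 = \pi(\ell)$ may satisfy $\ell < k$ and enter $\mathit{NBR}$ at higher priority, with $p_{\pi(\ell)}$ dropping by $2$. This precludes any naive lexicographic or payoff-weighted $f$-potential (e.g., $\Phi(s) = \sum_i w_i p_i(s)$ would need $w_i > 2 w_{i \oplus 1}$ cyclically, which is impossible); the argument must therefore reason globally about the cycle rather than step by step.
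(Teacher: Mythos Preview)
Your reduction to a finite subgame and the observation that $I = N$ (via the payoff-balance argument on the cycle) are correct and useful. However, the proof breaks down at the ``final step'': the counting argument centred on $\pi(1)$ is not carried out, and it is not clear how your outlined balance inequalities can be iterated to a contradiction. The special property of $\pi(1)$ (that it is scheduled whenever it is in $\mathit{NBR}$) yields only $a_{\pi(1)} \leq 2\,a_{\pi(1)\ominus 1}$; for any other player the predecessor may move while that player is still in $\mathit{NBR}$, so the ``analogous balance equation'' is strictly weaker and does not chain around the cycle. Your first observation (that every $\pi(j)\notin I$ is always in $\mathit{BR}$) is also not justified as written: the highest-priority player in $\mathit{NBR}(s^k)$ may well lie in $I$, so no contradiction arises from your sentence. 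Fortunately this observation is redundant, since $I=N$ already follows from the second one.

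More importantly, your last paragraph dismisses any lexicographic $f$-potential as impossible, but this is exactly what the paper constructs. The key idea you are missing is to look at the \emph{lowest}-priority player $l := \pi(n)$ rather than $\pi(1)$. When $l$ is scheduled, every other player is in $\mathit{BR}$; a short structural argument then forces $s_l \neq t_0$ and $s_{l\ominus 1} = t_0$, so the \emph{only} better response for $l$ is $t_0$. In your cycle framework this already gives a contradiction: once $l$ switches to $t_0$ it cannot change again (it would have to be scheduled, which requires $s_l\neq t_0$), yet $I=N$ says $l$ is scheduled at least once on the cycle and must return to its initial strategy. The paper packages this same insight as an $f$-potential: it orders the payoff tuple $(p_l, p_{l\oplus 1},\ldots,p_{l\oplus(n-1)})$ lexicographically on the last $n-1$ coordinates, but with the twist that $p_l = 0$ dominates $p_l \in \{-1,1\}$ in the first coordinate. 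This sidesteps precisely the cyclic weight obstruction you identified, because the first coordinate can only move once (from $\pm 1$ to $0$) and never back.
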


\begin{proof}
For $n=2$, it is easy to see that $\strgame_{\coord}$ has the FIP and hence the
result follows. Therefore, assume that $n>2$.  We prove the result by
showing that for every local scheduler $\sch$, it is possible to
associate an $\sch$-potential with the game $\strgame_{\coord}$.

Let $\sch$ be a local scheduler. By Proposition
\ref{prop:local-permutation}, the choice function $g$ associated with
$\sch$ is of the form $[\pi]$ for some permutation $\pi$ of $1, \LL,
n$. Let $l=\pi(n)$ be the last element in the permutation $\pi$ (this
will be the only information about $\pi$ that we shall rely upon). Let
$U := \{-1,0,1\}^n$ and let $F: S \to U$ be a function defined by
$F(\strprofile) := (\payoff_l(\strprofile), \payoff_{l\oplus
  1}(\strprofile), \payoff_{l \oplus 2}(\strprofile), \ldots,
\payoff_{l \oplus (n-1)}(\strprofile))$.

For $x \in U$ and $i \in \{1,\ldots,n\}$, $x_i$ denotes the $i$-th
entry in $x$ and as before, $x_{-1}=(x_2,\ldots,x_n)$. We also use the
notation $F(\strprofile)[i]$ to denote the $i$-th entry in the
$n$-tuple $F(\strprofile)$.

Let $\ordlexic$ be the strict counterpart of the lexicographic
ordering over the $(n-1)$-tuples of $-1,0,1$, where $-1 \ordlexic 0
\ordlexic 1$. We extend $\ordlexic$ to a relation $\ordtuple \subseteq
U \times U$. For $x,y \in U$ such that $x \neq y$, $x \ordtuple y$ if
one of the following mutually exclusive conditions holds:

\begin{itemize}
\item[C1] $x_1 \in \{-1,1\}$ and $y_1=0$,
\item[C2] $x_1=y_1=0$ and $x_{-1} \ordlexic y_{-1}$,
\item[C3] $x_1,y_1 \in \{-1,1\}$ and $x_{-1} \ordlexic y_{-1}$,
\item[C4] $x_1,y_1 \in \{-1,1\}$, $x_{-1}=y_{-1}$ and $x_{1} \ordlexic
  y_{1}$.
\end{itemize}

In other words, if the first entry of $y$ is $0$ and that of $x$ is
not $0$, then $x \ordtuple y$. If the first entry of both $x$ and $y$
is $0$, then to order $x$ and $y$ we use the lexicographic ordering
over the $(n-1)$-tuples $x_{-1}$ and $y_{-1}$. If the first entry of
both $x$ and $y$ is not $0$, then again to order $x$ and $y$ we use
the lexicographic ordering over $x_{-1}$ and $y_{-1}$, the exception
being when $x_{-1} = y_{-1}$. In this case, to determine the ordering
we use the lexicographic ordering over $x_1$ and $y_1$.

\begin{myclaim}
\label{clm:ordtuple}
The relation $\ordtuple$ is a strict total ordering over $U$.
\end{myclaim}

Assuming Claim \ref{clm:ordtuple}, consider an initial prefix
$\xi_{k+1}=(\strprofile^1,\ldots,\strprofile^k,\strprofile^{k+1})$ of
an improvement path $\xi$ that respects $\sch$. We claim that
$F(\strprofile^k) \ordtuple F(\strprofile^{k+1})$.  We have the
following cases:

\begin{itemize}
\item $\sch(\strprofile^k)= l \oplus i$ where $i \in \{1, \ldots,
  n-1\}$. Since $\xi$ respects $\sch$, we have $\payoff_{l \oplus
    i}(\strprofile^k) < \payoff_{l \oplus i}(\strprofile^{k+1})$, so
  $F(\strprofile^k)[i+1] \ordlexic F(\strprofile^{k+1})[i+1]$. Since
  $i \neq n$, if $i>1$, then by the definition of the payoff functions,
  for all $j \in \{1,\ldots, i-1\}$, $\payoff_{l \oplus
    j}(\strprofile^k) = \payoff_{l \oplus j}(\strprofile^{k+1})$. If
  $i \neq n-1$, then
  $\payoff_l(\strprofile^k)=\payoff_l(\strprofile^{k+1})$ and it
  immediately follows that $F(\strprofile^k) \ordtuple
  F(\strprofile^{k+1})$. Therefore, the interesting case is when
  $i=n-1$. Here we show that the first entry in $F(\strprofile^{k+1})$
  remains $0$ after the update by player $n-1$ iff the first entry in
  $F(\strprofile^{k})$ is $0$.

\begin{itemize}
\item If $F(\strprofile^k)[1]=0$, then
  $F(\strprofile^{k+1})[1]=0$. 

Indeed, suppose $F(\strprofile^k)[1]=0$. Since $\sch(\strprofile^k)\neq l$, we
have $\strprofile^k_l=\strprofile^{k+1}_l$. By the definition of the
payoff function, for any joint strategy $\strprofile$,
$\payoff_l(\strprofile)=0$ iff $\strprofile_l=t_0$. Thus irrespective
of the choice of $l \oplus (n-1)$ we have
$\payoff_l(\strprofile^{k+1})=0$, so $F(\strprofile^{k+1})[1]=0$.

\smallskip

\item If $F(\strprofile^k)[1] \neq 0$, then $F(\strprofile^{k+1})[1]
  \neq 0$. 

Suppose $F(\strprofile^k)[1] \neq 0$. By the definition of the
payoff functions, $\strprofile^k_l \neq t_0$. Since
$\sch(\strprofile^k)\neq l$, we have
$\strprofile^k_l=\strprofile^{k+1}_l$. Therefore irrespective of the
choice of $l \oplus (n-1)$ we have $\payoff_l(\strprofile^{k-1})\neq
0$, so $F(\strprofile^{k+1})[1] \neq 0$.
\end{itemize}

Thus by conditions C2 and C3 in the definition of $\ordtuple$, and the
fact that $(F(\strprofile^k))_{-1} \ordlexic
(F(\strprofile^{k+1}))_{-1}$, we indeed have $F(\strprofile^k)
\ordtuple F(\strprofile^{k+1})$.

\smallskip

\item $\sch(\strprofile^k)=l$. Since $\xi$ respects $\sch$, for all $i
  \in \{1,\ldots,n-1\}$ we have $l \oplus i \in
  \brset(\strprofile^k)$. We claim that in this case, $\strprofile^k_l
  \neq t_0$ and $\strprofile^k_{l \ominus 1}=t_0$. Suppose not. If
  $\strprofile^k_l = t_0$, then for all $i \in \{1,\ldots,n-1\}$, $l
  \oplus i \in \brset(\strprofile^k)$ implies that
  $\strprofile^k_{l\oplus i} = t_0$. This in turn implies that $l \in
  \brset(\strprofile^k)$, which is a contradiction. If
  $\strprofile^k_{l \ominus 1} \neq t_0$, then we have the following
  two possibilities:
  \begin{itemize}
  \item $\strprofile^k_{l \ominus 1}=\strprofile^k_l$. This implies
    $l \in \brset(\strprofile^k)$, which is a contradiction.
  \item $\strprofile^k_{l \ominus 1} \neq \strprofile^k_l$. Then there
    exists $j \in \{1,\ldots, n-1\}$ such that $\strprofile^k_{l\oplus
      j} = \strprofile^k_{l \ominus 1}$ and $\strprofile^k_{l \oplus
      (j-1)} \neq \strprofile^k_{l \oplus j}$. Since
    $\strprofile^k_{l\oplus j} \neq t_0$, this implies that $l \oplus
    j \not\in \brset(\strprofile^k)$, which is a contradiction.
  \end{itemize}


Now, if $\strprofile^k_l \neq t_0$, $\strprofile^k_{l \ominus 1}=t_0$
and $\payoff_l(\strprofile^k)<\payoff_l(\strprofile^{k+1})$, then
$\strprofile^{k+1}_l=t_0$. By C1 in the definition of $\ordtuple$, it
then follows that $F(\strprofile^k) \ordtuple F(\strprofile^{k+1})$.
\end{itemize}

Finally, since the set $U$ which is the range of the function $F$ is
finite and $\ordtuple$ is a strict total order on $U$, we can use an
appropriate encoding $e:U \to \mathbb{R}$ such that $u_1 \ordtuple
u_2$ iff $e(u_1) < e(u_2)$. Then $e(F(\strprofile^k)) <
e(F(\strprofile^{k+1}))$. So $e \circ F$ is an $f$-potential. By the
remark following Theorem \ref{thm:fpotential} the result follows.
\HB
\end{proof}

\noindent {\it Proof of Claim \ref{clm:ordtuple}}. Let $x,y \in U$
such that $x \neq y$. We have the following cases.

\begin{itemize}
\item $x_1 \in \{-1,1\}$ and $y_1=0$. Then by C1, $x \ordtuple y$.


\item $x_1=0$ and $y_1=0$. Then by C2, if $x_{-1} \ordlexic y_{-1}$ then $x \ordtuple y$
  else $y \ordtuple x$.


\item $x_1=0$ and $y_1 \in \{-1,1\}$. Then by C1, $y \ordtuple x$.


\item $x_1,y_1 \in \{-1,1\}$ and $x_{-1} \neq y_{-1}$. Then by C3, if $x_{-1}
  \ordlexic y_{-1}$ then $x \ordtuple y$ else $y \ordtuple x$.


\item $x_1,y_1 \in \{-1,1\}$ and $x_{-1} = y_{-1}$. Then by C4, if $x_1
  \ordlexic y_1$ then $x \ordtuple y$ else $y \ordtuple x$.
\end{itemize}
Further, it can be verified that the relation $\ordtuple$ is
transitive by a straightforward case analysis. 
\HB

\medskip

Note that Theorem \ref{thm:coord-local} cannot be extended to
set-based schedulers. Indeed, suppose that $n > 2$, and for some $t
\neq t_0$ we have $t \in \cap_{i \in N} S_i$. Consider the joint
strategy $s := (t, t_0, \LL, t_0)$ and a set-based scheduler $f$ such
that for all $k \in \{1, \LL, n\}$, 
$f(\{k, k \oplus 1\}) := k \oplus 1, \ f(\{k, k \oplus 2\}) := k$,
with arbitrary values for other inputs.
Then the following infinite improvement path respects this scheduler.
For the  sake
of readability we underlined the strategies that are not best responses.
\[
(\underline{t}, \underline{t_0}, \LL, t_0),  \ (\underline{t}, t,
\underline{t_0}, \LL, t_0),  \
(t_0, \underline{t}, \underline{t_0}, \LL, t_0), \LL
\]


\section{Bounds on finding a Nash equilibrium}
\label{sec:bounds}

Schedulers, apart of allowing us to classify weakly acyclic games are
also useful to improve bounds on finding a Nash equilibrium.  We start with
the following simple observation.

\begin{proposition}\label{prop:imp}
  Consider a weakly acyclic game $G$ for $n$ players in which each
  player has at most $k$ strategies. Then starting from any joint strategy
  there exists a finite improvement path of length at most $k^n$.
\end{proposition}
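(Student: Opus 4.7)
The plan is to run a pigeonhole / shortcut argument on the shortest finite improvement path starting at an arbitrary joint strategy. Since $G$ is weakly acyclic, from any joint strategy $s \in S$ there exists at least one finite improvement path, and I would pick one of minimum length, say $\rho = (s^0, s^1, \ldots, s^m)$ with $s^0 = s$. Note that since $\rho$ is finite and maximal, $s^m$ is a Nash equilibrium. The total number of joint strategies is $|S| = |S_1| \cdots |S_n| \le k^n$.

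The key step is to show $m \le k^n$. Suppose for contradiction that $m > k^n$. Then $\rho$ contains at least $k^n + 1$ joint strategies $s^0, \ldots, s^m$, so by the pigeonhole principle there exist indices $0 \le i < j \le m$ with $s^i = s^j$. I would then form the shortcut sequence
\[
\rho' := (s^0, \ldots, s^i, s^{j+1}, \ldots, s^m)
\]
(and $\rho' := (s^0, \ldots, s^i)$ in the degenerate case $j = m$). Each step inherited from $\rho$ before position $i$ or after position $j$ remains a unilateral improving deviation, and the splice step from $s^i$ to $s^{j+1}$ is legal because $s^i = s^j$, so the deviation that $\rho$ performs at position $j$ can be performed at position $i$ with identical effect. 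The last joint strategy of $\rho'$ is either $s^m$ (a Nash equilibrium) or $s^i = s^j = s^m$ when $j=m$, hence $\rho'$ is maximal and thus a bona fide finite improvement path from $s$. Since $\rho'$ is strictly shorter than $\rho$, this contradicts the minimality of $\rho$, and so $m \le k^n$.

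I do not expect any serious obstacle: the only point requiring minor care is verifying that the shortcut sequence is still an \emph{improvement path} in the sense of the paper, i.e., maximal and made up of unilateral improving deviations. Maximality is automatic from the fact that the last state remains a Nash equilibrium, and the improvement property of the spliced step follows because better-response deviations depend only on the current joint strategy, not on the history. This gives the claimed bound of $k^n$ on the length of the improvement path.
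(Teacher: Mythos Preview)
Your proof is correct and takes essentially the same approach as the paper: both arguments exploit that a repeated joint strategy on an improvement path bounds a cycle that can be excised, leaving a path with all distinct joint strategies and hence length at most $k^n$. The paper phrases this as iterated deletion of cyclic fragments from an arbitrary finite improvement path, while you phrase it as a minimality-plus-pigeonhole contradiction, but the content is the same.
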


\begin{proof}
  Given a joint strategy $s$ consider a finite improvement path $\xi$
  that starts from it.  If $\xi$ has a fragment of the form $s^1, \LL,
  s^k, s^1$, we can safely delete from it the fragment $s^1, \LL,
  s^k$.  Repeating this process we get a finite improvement path that
  starts from $s$ in which each joint strategy appears at most once.
  By assumption there are at most $k^n$ joint strategies, which yields the
  claim.
\HB
\end{proof}

In what follows, given a joint strategy $s$ in a finite game $G$ we
focus on a possibly infinite tree ${\cal{T}}(G,s)$ the root of which
is $s$ and the nodes of which are joint strategies such that $s''$ is
a child of $s'$ if $(s',s'')$ is a step in an improvement path
starting from $s$.  We have the following result in which each step
consists of traversing an edge in the tree ${\cal{T}}(G,s)$.

\begin{theorem} \label{thm:sch}
  Consider a weakly acyclic game $G$ for $n$ players in which each
  player has at most $k$ strategies. 

  \begin{enumerate}[(i)]
  \item Suppose that in each joint strategy each player has at most
    one better response.  Then in each tree ${\cal{T}}(G,s)$ a Nash
    equilibrium can be found in $O(n^{k^n})$ steps.
  
\item If $G$ respects a scheduler, then in each tree ${\cal{T}}(G,s)$ a
    Nash equilibrium can be found in $O(k^n)$ steps.
  \end{enumerate}
\end{theorem}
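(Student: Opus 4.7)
The plan for part (ii) is to invoke Theorem \ref{thm:sch-positional}: since $G$ respects some scheduler, it also respects a state-based scheduler $f_{\mathit{st}}$. From the initial joint strategy $s$ I would run the following greedy procedure: while the current joint strategy $s'$ is not a Nash equilibrium, query $f_{\mathit{st}}(s') = i$ (which by construction belongs to $\nbrset(s')$) and traverse any better-response edge $s' \betredge{i} s''$. The sequence of visited joint strategies is an improvement path that respects $f_{\mathit{st}}$, hence terminates. Moreover, no joint strategy can be repeated along it: a repetition would yield a loop that could be iterated, producing an infinite path respecting the state-based $f_{\mathit{st}}$, a contradiction. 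Therefore the length is at most $|S| \le k^n$, giving $O(k^n)$ edge traversals.

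For part (i), I would first observe that the branching factor of ${\cal T}(G,s)$ is at most $n$: at any joint strategy, each of the $n$ players has at most one better response, hence contributes at most one outgoing edge. Next I would run a depth-first search of ${\cal T}(G,s)$ that cuts off a branch as soon as it would revisit a joint strategy already present on the current root-to-node path. Because $G$ is weakly acyclic, some finite improvement path from $s$ reaches a Nash equilibrium, and by the cycle-removal argument in the proof of Proposition \ref{prop:imp} we may assume this witness path visits each joint strategy at most once, so has length at most $k^n$. Since the witness path never triggers the pruning rule, the DFS must discover a Nash equilibrium within depth $k^n$. The visited sub-tree has depth at most $k^n$ and branching at most $n$, so it contains $O(n^{k^n})$ edges, each traversed at most twice by DFS.

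The main thing to verify is that the DFS pruning in (i) is safe, i.e.\ that a repetition-free witness improvement path of length at most $k^n$ is genuinely not cut off by the pruning rule; this is immediate because the pruning only triggers on revisits, whereas the witness is repetition-free by construction. Beyond that both parts are routine: (ii) is essentially a packaging of Theorem \ref{thm:sch-positional} together with the cycle-removal observation, and (i) is a standard bounded-depth DFS analysis applied to ${\cal T}(G,s)$, with the branching-factor bound supplied by the uniqueness-of-better-response hypothesis.
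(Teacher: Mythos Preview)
Your proposal is correct and follows essentially the same approach as the paper. For part~(ii) your argument is identical to the paper's: invoke Theorem~\ref{thm:sch-positional} to obtain a state-based scheduler, follow it, and rule out repetitions by noting that a repeated state would yield an infinite path respecting $f_{\mathit{st}}$. For part~(i) the paper simply observes that the branching factor is at most $n$, that Proposition~\ref{prop:imp} guarantees a Nash equilibrium within depth $k^n$, and hence that searching the first $k^n$ levels of ${\cal T}(G,s)$ costs $O(n^{k^n})$; your DFS-with-path-pruning is a slightly more explicit implementation of the same bounded-depth search, but the underlying argument and the resulting bound are the same.
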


\begin{proof}
 
\NI
$(i)$
By assumption each node in the tree ${\cal{T}}(G,s)$ has at most
$n$ successors.  Consequently, the subtree that consists of the
nodes that lie at a level at most $m$, where $m > 0$, has at most
$\sum_{i = 0}^{m} n^i$ nodes.  So to find a specific node in this
tree takes $O(n^m)$ steps.  Putting $m = k^n$, we get the claim
on the account of Proposition \ref{prop:imp}.

\II

\NI 
$(ii)$ Suppose $G$ respects a scheduler. By Theorem
\ref{thm:sch-positional} $G$ respects a state-based scheduler $f$.
Take an arbitrary joint strategy $s$ and follow an improvement path
$\xi$ that starts in $s$ and respects $f$.  By definition $\xi$ is a
finite path in ${\cal{T}}(G,s)$ that starts from the root. 

Suppose that some strategy appears more than once in $\xi$, i.e.,
$\xi$ starts with a fragment of the form $s^1, \LL, s^i, \LL, s^j,
s^i$, where $1 \leq i \leq j$.  Then, since $f$ is a state-based
scheduler, the infinite improvement path $s^1, \LL, s^{i-1}, (s^i, \LL,
s^j)^*$ also respects $f$, which contradicts the assumption.

So each joint strategy appears at most once in $\xi$. Consequently,
$\xi$ is of length at most $k^n$, which establishes the claim.  
\HB
\end{proof}

In total we introduced four classes of schedulers, the general ones,
state-based, set-based and local.  To define a scheduler we need to
describe its value on all its inputs.  This yields the progression
given in Table \ref{tab:1}. The last entry is justified by Proposition
\ref{prop:local-permutation} that allows us to describe a local
scheduler by specifying a single permutation of the numbers $1, \LL,
n$.  

\begin{table}[hp]
\begin{center}
\begin{tabular}{|l|l|}
\hline
type of scheduler & number of inputs \\
\hline
general           & infinite \\
state-based       & $k^n$ \\
set-based         & $2^n$ \\
local             & $n$ \\
\hline 
\end{tabular}
\\[2mm]
\caption{Input size of various schedulers\label{tab:1}}
\end{center}
\end{table}

\section*{Acknowledgement}

We thank Ruben Brokkelkamp and Mees de Vries for helpful discussions
and Mona Rahn and the reviewers for useful comments.
\bibliographystyle{abbrv}
\bibliography{/ufs/apt/bib/e.bib,/ufs/apt/bib/apt.bib}

\end{document}